\theoremstyle{thmstyleone}%
\newtheorem{theorem}{Theorem}
\newtheorem{proposition}[theorem]{Proposition}%
\newtheorem{corollary}{Corollary}
\theoremstyle{thmstyletwo}%
\theoremstyle{thmstylethree}%
\begin{document}

\title[ ]{Cost free hyper-parameter selection/averaging for Bayesian inverse problems with vanilla and Rao-Blackwellized SMC Samplers}

\author*[1]{\fnm{Alessandro} \sur{Viani}}\email{viani@dima.unige.it}

\author[2]{\fnm{Adam M} \sur{Johansen}}\email{a.m.johansen@warwick.ac.uk}

\author[1]{\fnm{Alberto} \sur{Sorrentino}}\email{sorrentino@dima.unige.it}

\affil*[1]{\orgdiv{Dipartimento di Matematica}, \orgname{Universit\`a di Genova}, \orgaddress{\city{Genova}, \postcode{16146}, \country{IT}}}

\affil[2]{\orgdiv{Department of Statistics}, \orgname{University of Warwick}, \orgaddress{\city{Coventry}, \postcode{CV4 7AL}, \country{UK}}}

\abstract{In Bayesian inverse problems, one aims at characterizing the posterior distribution of a set of unknowns, given indirect measurements. For non-linear/non-Gaussian problems, analytic solutions are seldom available: Sequential Monte Carlo samplers offer a powerful tool for approximating complex posteriors, by constructing an auxiliary sequence of densities that smoothly reaches the posterior.

Often the posterior depends on a scalar hyper-parameter, for which limited prior information is available. In this work, we show that properly designed Sequential Monte Carlo (SMC) samplers naturally provide an approximation of the marginal likelihood associated with this hyper-parameter for free, i.e. at a negligible additional computational cost. The proposed method proceeds by constructing the auxiliary sequence of distributions in such a way that each of them can be interpreted as a posterior distribution corresponding to a different value of the hyper-parameter.
This can be exploited to perform selection of the hyper-parameter in \textit{Empirical Bayes} (EB) approaches, as well as averaging across values of the hyper-parameter according to some hyper-prior distribution in \textit{Fully Bayesian} (FB) approaches.

For FB approaches, the proposed method has the further benefit of allowing prior sensitivity analysis at a negligible computational cost. In addition, the proposed method exploits particles at all the (relevant) iterations, thus  alleviating one of the known limitations of SMC samplers, i.e. the fact that all samples at intermediate iterations are typically discarded.

We show numerical results for two distinct cases where the hyper-parameter affects only the likelihood: a toy example, where an SMC sampler is used to approximate the full posterior distribution; and a brain imaging example, where a Rao-Blackwellized SMC sampler is used to approximate the posterior distribution of a subset of parameters in a conditionally linear Gaussian model.}

\keywords{Bayesian inverse problems, Hyper-parameter estimation, Sequential Monte Carlo samplers, Rao-Blackwellization, Empirical Bayes, Fully Bayes}



\maketitle

\section{Introduction}
\label{Sec:Introduction}

In Bayesian inverse problems, one is interested in approximating the posterior distribution of a set of unobservable quantities, $x$, conditioned on indirect measurements, $y$ \cite{stuart2010inverse}. Often the posterior distribution depends on a scalar hyper-parameter, $\theta\in\Theta\subseteq\mathbb{R}$, e.g. the noise variance: one can either perform hyper-parameter selection with an Empirical Bayes (EB) approach targeting the conditional posterior $p^{\theta^\star}(x\mid y)$ with the hyper-parameter set to the value which maximizes the marginal likelihood, $\theta^\star:=\textrm{arg\,max}_{\theta\in\Theta}\{p^{\theta}(y)\}$, sometimes termed type-II maximum likelihood \cite{good1965estimation}, or else marginalize out the hyper-parameter through a Fully Bayesian (FB) approach, targeting the posterior $p(x\mid y)$. However, both approaches often result in costly procedures.

One relatively common tool for approximating posterior distributions arising in Bayesian inverse problems are Sequential Monte Carlo (SMC) samplers \cite{del2006sequential}. 
SMC samplers construct an artificial sequence of distributions such that the first one can be readily sampled from and the last one coincides with the distribution of interest; a set of particles is drawn from the first density, and evolves gradually to approximate each distribution in the sequence. 
 
In most implementations of SMC samplers for Bayesian inverse problems, the samples obtained at intermediate iterations are discarded, because intermediate iterations are only used to facilitate the approximation of the target distribution. Not directly using these samples, except perhaps to estimate a normalizing constant, seemingly results in a substantial waste of computational resources.
Indeed, we have recently witnessed a growing number of studies that attempt to exploit/recycle particles from previous iterations in the final estimates \cite{Gramacy2010,Drovandi2019,7339702}. Gramacy et al. \cite{Gramacy2010} propose to recycle particles at different iterations by considering a weighted sum of all the approximated distributions in order to maximise the Effective Sample Size (ESS). Alternatively, Nguyen et al. \cite{7339702} propose to combine particles from past SMC samplers iterations considering the so called \textit{Deterministic Mixture Weight estimator}; a solution derived to combine weighted particles drawn from different proposal distributions. Recently South et al. \cite{Drovandi2019} developed a method which allows the samples from each generation of the algorithm to be used to approximate integrals over a part of the state space.

In this work we show that, for a large class of hierarchical Bayesian inverse problems, the intermediate iterations of properly designed SMC samplers can be used to perform selection of the hyper-parameter and/or averaging with respect to it, making EB/FB approaches feasible. All of this has only a negligible additional computational cost and, in the case of averaging, it also entails recycling of the particles at intermediate iterations, thus reducing the typical waste of computational resources.

The key idea underlying the proposed method is to define the auxiliary sequence of distributions in such a way that each distribution is a posterior distribution conditioned on a different value for the hyper-parameter. Such construction turns out to be extremely simple under certain conditions, for instance when the hyper-parameter appears only in the likelihood and the likelihood belongs to the natural exponential family; under other circumstances, finding the right sequence can be more challenging. Given the sequence, the estimate of the normalizing constant, naturally produced by SMC samplers, corresponds to an estimate of the evidence for the specific value of the hyper-parameter, which then allows maximum likelihood or Bayesian inference on the hyper-parameter. 

We provide the right tempering sequence for two different models largely used in inverse problems:

\begin{itemize}
    \item when the likelihood belongs to the Natural Exponential Family (NEF): here the tempering sequence obtained by raising the likelihood to a growing power between zero and one results in a proper sequence of densities that can be interpreted as posterior distributions;
    \item when the conditional posterior for a subset of variables $x_1$ can be analytically computed, and an SMC sampler is used only to approximate the posterior on the remaining variables $x_2$. For this class of models, which includes among others Conditionally Linear Gaussian (CLG) models, the auxiliary distribution sequence devised for the first case does not, in general, work fine, therefore we devise alternative sequences that can be used fruitfully in two special subcases.
\end{itemize}

The most straightforward application of the proposed method is the context of additive Gaussian noise inverse problems; here the interest is in the estimation of the joint posterior distribution for the state variables and the noise variance or the posterior distribution for the state variables conditioned on the estimated value for the noise variance.

As a first examples we consider the problem of recovering the mean of a Gaussian distribution from noisy observations, showing that the proposed approach performs as well as alternative approaches but with significant advantages in computational time. Then we show numerical results for a real world problem encountered in source analysis of Magneto/Electro-Encephalography data, in this case we show that the proposed approach provides reliable results and a substantial reduction of computational cost with respect to alternative approaches.

\section{Motivating example: source estimation in magneto/ electro-encephalography}
\label{Sec:MotivatingExample}

Magneto-/Electro-EncephaloGraphy (M/E-EG) are two non-invasive medical imaging techniques that record the magnetic/electric field on the scalp; from these recordings, it is possible to estimate the underlying neural currents \cite{RevModPhys.65.413}. Using the \textit{dipolar} assumption, this problem consists of estimating an unknown number of point sources, called \textit{dipoles}, each one defined by two quantities:

\begin{itemize}
    \item a location in the brain volume, conveniently represented as the index $r$ of a cell of a discretized brain (or \textit{voxel}); dipole location is assumed to be fixed in time;
    \item a 3-D vector $q$ representing orientation and intensity of the neural current at the specified voxel, and changing dynamically in time.
\end{itemize}

The inference problem can be formalized as
\begin{subequations}
    \begin{gather}
        y (t) = \sum_{i=1}^d G(r_i)q_i(t)+\varepsilon(t) \label{Eq:MEG_Forward}\\
        \varepsilon(t) \sim \mathcal{N}\left(0, \theta^2 \Sigma\right) \label{Eq:MEG_Noise}
    \end{gather}
\end{subequations}
where: $t = 1,\dots, T$ is a time index; $y(t)$ is an array containing the data recorded by all M/E-EG sensors at time $t$; $d$ is the (unknown) number of dipoles; $G(r_i)$ is the so called lead-field matrix, representing the magnetic/electric field generated by a unitary dipole located at $r_i$; $\varepsilon(t)$ is additive Gaussian noise whose (spatial) covariance matrix $\Sigma$ is known up to a scale factor $\theta$.  

This model was originally adopted in \cite{sorrentino2013dynamic,Sorrentino_2014}, where all unknown parameters were sampled with an SMC sampler, leading to high computational cost for long time series; in \cite{sommariva2014sequential} a Rao-Blackwellized version was presented that imposed a Gaussian prior on the $q$ variables and exploited the CLG structure, allowing to treat long time series with reduced computational cost. Finally, in \cite{viani2021bayes} a hierarchical model was presented that overcomes the limitations of the Gaussian prior by using a hyper-prior on the prior variance, thus substantially reducing the dependence on this hyper-parameter. 
Defining $\mathbf{y}:=(y(1), \dots, y(T))$ and \\$\mathbf{q}_{1:d}:=(q_{1:d}(1), \dots, q_{1:d}(T))$, the posterior distribution decomposes as:
\begin{equation}
\begin{split}
        p^{\theta}&(d, r_{1:d}, \mathbf{q}_{1:d},\lambda \mid \mathbf{y}) =\\ &p^{\theta}(\mathbf{q}_{1:d} \mid \mathbf{y}, d, r_{1:d}, \lambda)p^{\theta}(d, r_{1:d},\lambda \mid \mathbf{y})
    \label{Eq:MEG_Posterior}
\end{split}
\end{equation}
where the conditional posterior $p^{\theta}(\mathbf{q}_{1:d} \mid \mathbf{y}, d, r_{1:d}, \lambda)$ can be computed analytically, and only the second factor on the right hand side of \eqref{Eq:MEG_Posterior} has to be approximated via Monte Carlo. Importantly, there remains a dependence on the hyper-parameter $\theta$, namely the overall noise level, whose value has to be estimated.

\section{SMC Samplers for Bayesian inverse problems}
\label{Sec:StateArt}
In this Section we provide a brief summary of a class of SMC samplers that are often used for the approximation of posterior distributions in Bayesian inference problems. Notice that SMC samplers can be applied in more general situations, not analyzed in this paper; for further details on general SMC samplers algorithms the reader is referred to \cite{del2006sequential,del2007sequential}.

Consider a Bayesian inference problem where the aim is to approximate the posterior distribution
\begin{equation}
    p(x \mid y) = \frac{p(x) p(y \mid x)}{p(y)}
    \label{Eq:BayesTheorem}
\end{equation}
where $y$ represents the data and $x$ the unknown parameters. The posterior distribution is often a complex distribution in a possibly high-dimensional space and is typically difficult to sample from directly.
SMC samplers provide an effective way to sample such complex distributions, and can be briefly summarized as follows. 

The fist step is to define a sequence of intermediate densities:
\begin{subequations}
    \begin{gather}
        \bigl\{p_t(x \mid y)\bigl\}_{t=0}^T,
        \label{Eq:SequenceDensity}\\
       p_T(x \mid y) = p(x \mid y),
       \label{Eq:condition_final}\\
       p_t(x \mid y) \simeq p_{t+1}(x \mid y),
       \label{Eq:SequenceSmoothness}
    \end{gather}
\end{subequations}
that ``smoothly'' transition from an easy-to-sample initial density $p_0$ to the posterior density $p_T$. Condition \eqref{Eq:SequenceSmoothness} is required in order to guarantee a smooth transition toward the target density and hence to allow a good approximation of $p_{t+1}$ to be obtained from the corresponding approximation of $p_t$.

A natural, but not mandatory, choice in Bayesian inference is to reach the posterior density by starting from the prior and increasing the power of the likelihood using the so called geometric bridge, or tempering path \cite{syed2021parallel,chopin2020introduction,bernton2019schrodinger,neal2001annealed}: 
\begin{subequations}
    \begin{gather}
        p_t(x \mid y) \propto p(x)p(y \mid x)^{\alpha_t}, \label{Eq:NaturalBayesianSequence}\\
        0 = \alpha_0 < \alpha_1 < ... < \alpha_T = 1. \label{Eq:SequenceExponents}
    \end{gather}
\end{subequations}

Once the sequence of distributions has been selected, SMC samplers work as follows:

    \begin{itemize}
        \item sample a set of $N$ weighted particles $\{\mathbf{x}^{(0)}; \mathbf{W}^{(0)}\}$ from the initial distribution $p_0$
        \item for $t=1,\ldots,T$:
        \begin{enumerate}
            \item perform one, or more, Markov Chain Monte Carlo (MCMC) step/s; such as Metropolis Hastings step/s
            
            \item perform an Importance Sampling (IS) step from the current distribution $p_{t-1}$ to the next distribution $p_t$ updating the un-normalized importance weights and normalizing them using the relations
                \begin{subequations}
                    \begin{align}
                            \mathbf{w}^{(t)} =& \mathbf{w}^{(t-1)} \frac{p_{t}(\mathbf{x}^{(t-1)}\mid y)}{p_{t-1}(\mathbf{x}^{(t-1)}\mid y)}, \label{Eq:ImportanceWeights}\\
                           \mathbf{W}^{(t)} :=& \frac{\mathbf{w}^{(t)}}{\sum_{n=1}^N \mathbf{w}^{(t)}_{n}}. \label{Eq:ImportanceWeightsNormalized}
                        \end{align}
                \end{subequations}
                
            At this point one obtains an approximation of the $t$-th distribution of the sequence as:
            \begin{equation}
                \hat p_t(x\mid y) = \sum_{n=1}^N \mathbf{W}_n^{(t)} \delta_{x_n^{(t)}}(x).
            \end{equation}
            In this step one also obtains an estimator of the normalizing constant of the distribution $p_t$, crucial for model selection in general and for the proposed method in particular. It can be easily evaluated; for simplicity, assuming that resampling occurs at every step, as the product over time of the average of the un-normalized importance weights at eah time:
            \begin{equation}
                \hat{p}_t(y) = \prod_{s=0}^t\frac{1}{N}\sum_{n=1}^{N}\mathbf{w}^{(s)}_{n}.
                \label{Eq:NormalizingConstant}
            \end{equation}
the expression in the case that resampling is conducted adaptively can be found,  is the corresponding product over resampling times of the average of the weights accumulated since the last resampling time (see, e.g., \cite[p. 1641]{guarniero2017iterated} for an explicit expression).

            \item perform a resampling step to avoid degeneracy of the importance weights \cite{douc2005comparison,gerber2019negative}. A widely used strategy is to perform resampling whenever the Effective Sample Size (ESS) (see, e.g, \cite{liu2008monte}) is under a fixed threshold.
        \end{enumerate}
    \end{itemize}

One important property of SMC samplers comes from equation \eqref{Eq:ImportanceWeightsNormalized} which allows the evaluation of the importance weights at time $t$ using only the particles at the previous step. This allows the order of the first two steps of the algorithm to be reversed, which further allows an adaptive choice of the actual sequence of densities, as defined in \eqref{Eq:SequenceDensity}, through an online selection of the next exponent  \cite{del2012adaptive,Sorrentino_2014}.

\section{Selection/averaging of the hyper-parameter} 
\label{Sec:Selection/Averaging}

Let $\Theta\subseteq\mathbb{R}$ and consider a Bayesian inverse problem depending on a hyper-parameter $\theta\in\Theta$. We are now going to show how an SMC sampler can be used both to select a specific value for the hyper-parameter and/or to approximate the joint posterior distribution $p(x,\theta\mid y)$
at no additional cost with respect to the SMC sampler that approximates the conditional posterior $p^{\theta}(x\mid y)$.

The key idea underlying the proposed method is to construct an SMC sampler whose target distribution is  $p^{\theta^\star}(x\mid y)$ for some value $\theta^\star\in\Theta$, and whose intermediate distributions are posterior distributions corresponding to different values of the hyper-parameter for a set of values $\Theta_{0:T}:=\{\theta\in\Theta \;:\; \theta = \theta(t) ;\; t=0, \cdots, T\}$

\begin{equation}
\begin{split}
p_t^{\theta^*}(x\mid y) =& p^{\theta(t)}(x\mid y) \\= &\frac{p^{\theta(t)}(y\mid x)p^{\theta(t)}(x)}{p^{\theta(t)}(y)}.  
\end{split}
 \label{Eq:InterpretableSequence}
\end{equation}

Given the sequence above, one can  estimate pointwise the evidence for the hyper-parameter $p^{\theta}(y)$ for $\theta \in \Theta_{0:T}$ through Importance Sampling \eqref{Eq:NormalizingConstant}.  Under regularity assumptions for $p^{\theta}(y)$ w.r.t. $\theta$ one can interpolate this finite set of values to obtain a smooth approximation of the evidence and, assuming the availability of a hyper-prior $p(\theta)$, that we assume to be negligible outside a compact set $[\theta_{\textrm{min}}, \theta_{\textrm{max}}]$, an approximation of the marginal posterior $\hat{p}(\theta\mid y)$.

For an EB approach, one can first find the mode of the interpolating function properly weighted
\begin{equation}
    \bar{\theta}=\textrm{arg\,max}_{\theta\in [\theta_{\textrm{min}}, \theta_{\textrm{max}}]}\{\hat{p}(\theta\mid y)\},
\end{equation} 
where we assume that the range of $\Theta_{0:T}$ contains  $\theta^*,\theta_{\textrm{min}}$ and $\theta_{\textrm{max}}$. This can be done numerically by binary search, using importance sampling to estimate the marginal likelihood of values of $\theta$ between those in $\Theta_{0:T}$. We can then apply importance sampling to obtain an approximation of $p^{\bar{\theta}}(x\mid y)$. 

In order to avoid degeneration of importance weights, one should do importance sampling from $p^{\theta(\bar{t})}(x\mid y)$, where $\theta (\bar t)$ is the closest value to $\bar \theta$ such that the support and tails of $p^{\theta (\bar t)}(x\mid y)$ are larger and heavier, respectively, than those of  $p^{\bar \theta} (x\mid y)$; for instance, assuming that $\{\theta(t)\}_{t=0,\dots,T}$ is a decreasing sequence, and that the distributions tails become lighter as $\theta$ becomes smaller, we shall select the iteration 
\begin{equation}
    \bar t = \arg \min \{t \;:\; \theta(t) > \bar \theta\}.
    \label{Eq:tBar}
\end{equation}

For a FB approach one obtains an approximation of the posterior
\begin{equation}
    p(\theta\mid y) \propto p^{\theta}(y) p(\theta) \label{Eq:NoiseFactorization}
\end{equation}
for $\theta \in \Theta_{0:T}$, allowing to compute estimates such as the posterior mean or the maximum a posteriori for the hyper-parameter. 

In addition, it is possible to approximate the marginal posterior of the parameters
\begin{equation}
    \begin{split}
        p(x\mid y) &= \int p(x,\theta\mid y)d\theta \\&= \int p^{\theta}(x\mid y)p(\theta\mid y)d\theta \\&\propto \int p^{\theta}(x\mid y)p^{\theta}(y)p(\theta)d\theta .
    \label{Eq:Averaging}
    \end{split}
\end{equation}
that takes into account uncertainty on parameters deriving from uncertainty on the hyper-parameter.
This can be done by considering all particles at all iterations and re-weighting them
\begin{equation}
\begin{split}
\small
    \hat p(x\mid y)  =  \sum_{t=0}^T\sum_{n=1}^N \mathbf{W}^{(t)}_n &\delta_{x}(x^{(t)}_n) \hat p^{\theta(t)}(y) \\&p(\theta(t)) g^{(t)}(\Theta_{0:T})
    \label{Eq:NumericalIntegration}
    \end{split}
\end{equation}
where $g^{(t)}$ is a function representing the interpolation weights.

For example, in the case of a standard quadrature method such as the trapezoidal rule we get
\begin{equation}
    g_t(\Theta_{0:T}) = \left\{
    \begin{array}{ll}
      \|\theta_1 - \theta_0\|/2 & t = 0\\
      \|\theta_{t+1} - \theta_{t-1}\|/2 & 1<t<T \\
      \|\theta_T - \theta_{T-1}\|/2 & t = T\\
    \end{array}
    \right.
\end{equation}
but of course more sophisticated options are available \cite{zhou2016toward}.

The additional computational cost required for calculating \eqref{Eq:NoiseFactorization} - \eqref{Eq:NumericalIntegration} is negligible compared to the one needed for the approximation of $p^{\theta^\star}(x\mid y)$ directly with an SMC sampler employing likelihood tempering.

Moreover, the proposed FB approach has the advantage of making usage of particles at all iterations, thus avoiding the usual waste of computational resources.\\

As a last point we remark that, in the FB case, it is possible to modify the hyper-prior without re-running the SMC sampler: this allows cheap \textit{prior sensitivity analysis}, an important aspect to consider in applied Bayesian analyses, at a very small computational cost.\\

The construction of sequence \eqref{Eq:InterpretableSequence} is not always straightforward. In the following, we consider an inverse problem whose likelihood belongs to the NEF and the prior does not depend on the hyper-parameter, deriving sequence \eqref{Eq:InterpretableSequence} for two distinct cases: 
\begin{enumerate}
    \item the case where SMC samplers are used to approximate the full posterior distribution;
    \item the case where  the conditional posterior for a subset of variables $x_1$ can be analytically computed, and a Rao-Blackwellized SMC sampler is used to approximate the posterior on the remaining variables $x_2$.
\end{enumerate}

\subsection{Case 1: vanilla SMC samplers for the full posterior distribution} 
\label{Sec:NEF_Selection/Averaging}

As the likelihood belongs to the Natural Exponential Family (NEF) with natural scalar hyper-parameter $\theta\in\Theta\subseteq\mathbb{R}$, it has the following density 
\begin{equation}
    p^{\theta}(y\mid x) = \exp(\theta T(y\mid x)-A_{\theta}) \label{Eq:NEF_Likelihood}
\end{equation}
where $T(y\mid x)$ is a sufficient statistic and $A_{\theta}$ represents the log-normalizing constant.\\

\textbf{Proposition 1}
\textit{Let $p^{\theta} \in NEF$ with sufficient statistic $T$ and canonical parameter $\theta$ s.t. $p^{\theta}(x) = \exp(\theta T(x)-A_{\theta})$
and $\alpha\neq0$, then:}
\begin{equation*}
    [p^{\theta}(x)]^{\alpha} = \exp(A_{\alpha\theta}-\alpha A_{\theta}) p^{\alpha\theta}(x)
\end{equation*}

By the previous proposition, whose trivial proof is provided in Appendix, it is straightforward to show that the sequence \eqref{Eq:NaturalBayesianSequence} naturally provides an evaluation of the joint posterior distribution $p(x,\theta\mid y)$ for the set of values $\Theta_{1:T}=\{\theta\in\Theta \;:\; \theta = \theta\alpha_t ;\; t=1, \cdots, T\}$.

As an example, in the case of an inverse problem with additive Gaussian noise of unknown variance, the distributions of the sequence are posterior distributions corresponding to a decreasing variance $\sigma(t) =  \sigma^\star/\sqrt{\alpha_t}$\, where $\sigma^\star$ represents the noise standard deviation at the very last iteration of the SMC samplers.

\subsection{Case 2: Rao-Blackwellized SMC samplers}
\label{Sec:Selection/AveragingRaoBlackwell}

We now consider the case where the unknown variable $x$ can be decomposed into a pair of components 
$x = (x_1,x_2)$, and: 

\begin{itemize}
    \item the prior on $x_1$ belongs to the NEF with respect to a hyper-parameter $\lambda$ 
\begin{equation}
    p(x_1\mid\lambda)=\exp(\lambda S(x_1)-A_{\lambda})
    \label{Eq:NEF_prior}
\end{equation}
    where $S(x_1)$ is a sufficient statistic and $A_{\lambda}$ is the log-normalization constant;
    \item  the conditional posterior $p^{\theta}(x_1\mid x_2,\lambda,y)$ can be computed analytically. 
\end{itemize}
Under these assumptions, in the natural decomposition of the joint posterior density 
\begin{equation}
    p^{\theta}(x_1,x_2,\lambda\mid y) = p^{\theta}(x_1\mid x_2,\lambda,y)p^{\theta}(x_2, \lambda \mid y),
    \label{Eq:DecompositionFullPosterior}
\end{equation}
only the second factor of the right hand side needs to be approximated by an SMC sampler, thus reducing the variance of the importance weights and improving the quality of the approximation.
This class of models is widely used and appreciated in applications; in particular, an SMC sampler targeting the marginal posterior $p^{\theta}(x_2, \lambda \mid y)$ typically leads to more accurate estimates than an SMC sampler targeting the full posterior and using the same computational resources \cite{murphy2001rao}.

As a consequence of the hypothesis that both the likelihood \eqref{Eq:NEF_Likelihood} and the prior on the Rao-Blackwellized variable \eqref{Eq:NEF_prior} belong to the NEF, the marginal likelihood turns out to be

\begin{equation}
        \begin{split}
            p^{\theta}(y\mid x_2,\lambda) & = \int p^{\theta}(y\mid x_1,x_2,\lambda)p^{\lambda}(x_1) dx_1 \\
            & = \int \exp\biggl(
            \biggl\langle\begin{bmatrix}
            \theta\\
            \lambda
            \end{bmatrix},
            \begin{bmatrix}
            T(y\mid x_1,x_2)\\
            S(x_1)
            \end{bmatrix}
            \biggl\rangle \\ &- (A_{\theta}+A_{\lambda})\biggl) dx_1.
     \end{split}
    \label{Eq:MaringalLike}         
\end{equation}

In most cases, the marginal likelihood in equation \eqref{Eq:MaringalLike} does not have a closed form solution; below we show two special cases in which it does.

\subsubsection*{Additive statistic for the Likelihood}

If the statistic $T(y\mid x_1,x_2)$ of the full likelihood \eqref{Eq:NEF_Likelihood} is the sum of two statistics $T(y\mid x_1)$ and $T(y\mid x_2)$, then the marginal likelihood also belongs to the NEF with respect to the same parameter

\begin{equation}
        \begin{split}
            p^{\theta}(y \mid x_2,\lambda) &= \int \exp\biggl(\theta\bigl (T(y\mid x_1)+ T(y\mid x_2)\bigl )\\& +\lambda S(x_1)-(A^{(1)}_{\theta}+A^{(2)}_{\theta}+A_{\lambda})\biggl) dx_1 \\&\propto
         \exp\left(\theta T(y\mid x_2) - A^{(2)}_{\theta}\right).
        \end{split}
        \label{Eq:StatisticSum}
\end{equation}

For this particular subclass of models, the natural sequence \eqref{Eq:NaturalBayesianSequence} is still valid, as the marginal likelihood is still in the NEF.

\subsubsection*{Conditionally Linear Gaussian Model}
If both the full likelihood and the prior on $x_1$ have normal distribution 

\begin{subequations}
    \begin{align}
        p^{\theta}(y\mid x_1,x_2) \sim& \mathcal{N}(\mu(x_2)x_1, \theta^2\Sigma) \label{Eq:CLG_full}\\
        p(x_1\mid\lambda) \sim& \mathcal{N}(\eta, \Gamma_{\lambda}). \label{Eq:CLG_linear}
    \end{align}
\end{subequations}

it is well known that both the marginal likelihood \eqref{Prop:marginal_like} and the conditional posterior \cite{sommariva2014sequential} are Gaussian with known mean and variance

\begin{subequations}
    \begin{gather}
    \begin{split}
            p^{\theta}&(y\mid x_2,\lambda) \sim \\ &\mathcal{N}\left(\mu(x_2)\eta,\mu(x_2)^t \Gamma_{\lambda} \mu(x_2) + \theta^2\Sigma\right) \label{Eq:MarginalLikeCLG}
    \end{split}\\            
            p^{\theta}(x_1 \mid x_2,\lambda,y) \sim \mathcal{N}\bigl(\bar\mu, \bar\Sigma\bigl);
        \label{Eq:CLG_MarginalPost}
    \end{gather}
\end{subequations}

where:
\begin{itemize}
    \item $\bar\mu := \Gamma_{\lambda} \mu(x_2)^t (\mu(x_2) \Gamma_{\lambda} \mu(x_2)^t + \theta^2\Sigma)^{-1}y$
    \item $\bar\Sigma := \Gamma_{\lambda} - \Gamma_{\lambda} \mu(x_2)^t (\mu(x_2) \Gamma_{\lambda} \mu(x_2)^t)^{-1} \mu(x_2) \Gamma_{\lambda}$
\end{itemize}

In this case, the marginal likelihood is not in the NEF with respect to the parameter $\theta$ and the natural sequence \eqref{Eq:NaturalBayesianSequence} does not work. Indeed, by applying to the CLG model the same sequence constructed in the general case, one would get
    
\begin{subequations}
    \begin{gather}
        p_t^{\theta}(x_2,\lambda\mid y) \propto p(x_2,\lambda) p^{\theta}(y\mid x_2,\lambda)^{\alpha_t}
        \label{Eq:FirstNatrualChoice} \\
        \begin{split}
        p^{\theta}(y &\mid x_2,\lambda)^{\alpha_t} \propto \\ &\mathcal{N}\left(y;\; \eta\mu(x_2), \frac{1}{\alpha_t}\left(\mu(x_2)^t \Gamma_{\lambda} \mu(x_2) + \theta^2\Sigma\right) \right),
        \end{split}
        \label{Eq:FirstSequenceProblem}
    \end{gather}
\end{subequations}
    
since the marginal likelihood also embodies the prior on the marginalized variable $x_1$, the exponent also affects the prior for $x_1$; therefore, as already observed in \cite{sommariva2014sequential}, the distributions of this sequence cannot be considered as (marginals of) posterior distributions under the same prior.
    
Alternatively, one could consider  the sequence of marginals of the natural sequence for the approximation of the complete posterior density:

\begin{equation}
    \begin{split}
        p_t^{\theta}(x_2,\lambda\mid y)  & := \int p_t^{\theta}(x_1,x_2,\lambda\mid y) dx_1
        \label{Eq:SecondNatrualChoice}
    \end{split}
\end{equation}
    
However, also this choice leads to a sequence of distributions that cannot be interpreted as posterior distributions under different values of $\theta$; this happens because, as shown in Appendix (Corollary \ref{Cor:exponential_gaussian} and Proposition \ref{Prop:marginal_like}), the integral in \eqref{Eq:SecondNatrualChoice} is
\begin{equation}
\scriptstyle
    \begin{split}
        &p_t^{\theta}(x_2,\lambda\mid y) \propto \int p(x_1,x_2,\lambda) p^{\theta}(y\mid x_1,x_2,\lambda)^{\alpha_t} dx_1 = \\
        & p(x_2,\lambda)\int p(x_1\mid x_2,\lambda) p^{\theta}(y\mid x_1,x_2,\lambda)^{\alpha_t} dx_1 = p(x_2,\lambda) \\ & \ell_t(\lambda)\mathcal{N}\left(y;\; \eta\mu(x_2), \mu(x_2)^t \Gamma_{\lambda} \mu(x_2) + \frac{\theta^2}{\alpha_t} \Sigma\right),
        \label{Eq:SecondSequenceProblem}
    \end{split}
\end{equation}

where the Gaussian distribution can be interpreted as the marginal likelihood of the CLG model, with a different value of $\theta$, but the normalization constant
$\ell_t(\lambda)$, defined as in Proposition \ref{Cor:exponential_gaussian} in Appendix, depends on the hyper-parameter $\lambda$ and thus actually modifies the distribution.

However, it is not difficult to devise a proper sequence of intermediate distributions for the case of a CLG model. In fact, it is sufficient to explicitly remove the $\lambda$-dependent normalization factor from \eqref{Eq:SecondSequenceProblem} and construct the sequence as:

\begin{equation}
    \begin{split}
        p_t^{\theta}(x_2,\lambda \mid y) &\propto p^{\theta}(x_2,\lambda)\\ & \mathcal{N}\left(y;\; \eta\mu(x_2), \mu(x_2)^t \Gamma_{\lambda} \mu(x_2) + \frac{\theta^2}{\alpha_t}\Sigma\right). 
    \end{split}
\label{Eq:InnovativeSequence}
\end{equation}

With this definition we can apply the proposed approach to a CLG model while also exploiting Rao-Blakwellization.

\section{Toy Example}
\label{Sec:ToyExample}
We proceed with a numerical validation of the proposed approach by first using a toy example\footnote{Code available at: \url{https://github.com/alessandro-viani/ToyExample.git}}; following the arguments in Section \ref{Sec:Selection/Averaging}, we compare the results with natural alternatives for Fully Bayesian (FB) and Empirical Bayes (EB) approaches.

\subsection{Setup}
\label{Sec:Setup}
Consider an inverse problem where the aim is to reconstruct the mean of a Gaussian waveform of known variance $\sigma^2$, given noisy measurements $y(t)$, i.e.

\begin{subequations}
\begin{gather}
y(t) = \mathcal{N}(t;\; \mu, \sigma^2) + \varepsilon(t) \\
\varepsilon(t) \sim \mathcal{N}(0, \theta^2).
    \end{gather}
\end{subequations}
where $\mathcal{N}(t;\; \mu, \sigma^2)$ is the probability density function of a Gaussian of mean $\mu$ and standard deviation $\sigma$, evaluated at $t$.

We assume observations are available at $I$ points separated by unit intervals $\{ t_i\}_{i=1}^I$ and we want to make inference on the Gaussian mean.

\begin{figure}
    \centering
    \includegraphics[width=\columnwidth]{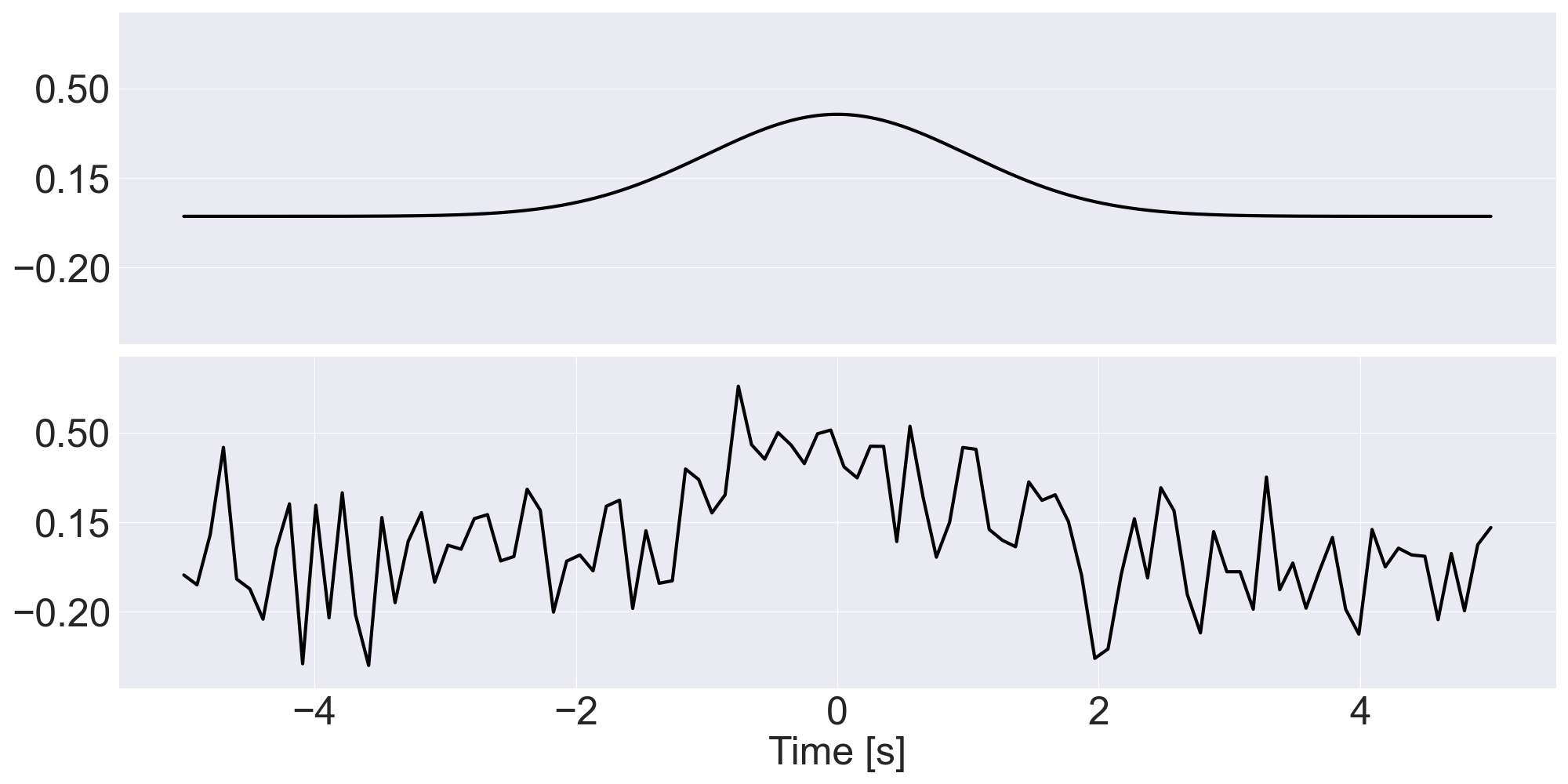}
    \caption{The figure shows in the first row an example of data without noise while in the second row the same data with the addition of noise.}
    \label{fig:DataToy}
\end{figure}
\subsection{Data Generation}

Data $\mathbf{y}=\left(y(t_1), \dots, y(t_I)\right)$ are generated considering $I=100$ measurements in the interval $[-5,\;5]$ obtained by perturbing the Gaussian density at each observation time independently with additive Gaussian noise of zero mean and standard deviation $\theta_{\textrm{true}} \sim \mathcal{U}[0.1,\; 0.2]$.

With these settings, we generate 100 independent realizations of the dataset in order to test the proposed algorithm.

\subsection{Prior and likelihood}
\begin{itemize}
    \item We assume $p(\mu) \sim \mathcal{U}\left([-5,\;5]\right)$ as a truncation of the Jeffrey's prior to the convex hull of the measurements;
    \item we assume $p(\theta) \sim \Gamma(2, 4\theta^\star)$, where $\theta^\star$ is an estimated value for the hyper-parameter;
    \item we assume conditional independence between observations given the parameter, obtaining a simple factorization for the likelihood
    \begin{equation}
        p^{\theta}(\mathbf{y}\mid \mu) = \prod_{t=1}^T p^{\theta}(y(t) \mid \mu).
        \label{Eq:prod_likelihood}
    \end{equation}

\end{itemize}

\subsection{Algorithm settings}
\label{Sec:AglorithmSettings}
For each of the 100 generated datasets, we compare the results obtained with the proposed method with the one obtained with a FB approach and an EB approach.

Each SMC sampler used has the following settings:
\begin{itemize}
    \item number of particles set to $100$ as a compromise between performances and quality of the approximation;
    \item $\theta^\star=\min\{\theta_{\textrm{true}}\}/2$; this allows the true value $\theta_{\textrm{true}}$ to be within the range of values explored by the Proposed method during SMC sampler iterations;
    \item number of iterations set to $500$, with the sequence of exponents growing exponentially in order to guarantee a smooth transit between intermediate distributions;
    \item resampling step performed by means of systematic resampling \cite{douc2005comparison} whenever the effective sample size is lower than half of the number of particles;
    \item Gaussian kernel for the MCMC step.
\end{itemize}

\subsection{Comparison with alternative approaches}

We compare the performances of the proposed method with those of two alternatives, one performing an Empirical Bayes approach and the other one performing a Fully Bayesian approach. In the following, particularly in the pictures, we denote by PropEB and PropFB the results obtained by the proposed method performing Empirical Bayes and Fully Bayesian approaches, respectively.

\subsubsection*{Empirical Bayes Approach}
For the EB approach we first obtain an estimate for the maximum a posteriori for the hyper-parameter:

\begin{equation}
    \hat\theta_{\textrm{MAP}} = \textrm{arg\,max}_{\theta}\{\hat{p}(\theta\mid \mathbf{y})\};
    \label{Eq:MAPtheta}
\end{equation}
where $\hat{p}(\theta\mid \mathbf{y}) $  is obtained by considering $M=100$ evenly spaced samples in the interval $[-5,\;5]$ :
\[\hat{p}(\theta\mid \mathbf{y})= \frac{1}{M}\sum_{i=1}^M p(\mu_i,\theta\mid \mathbf{y})\]
and then selecting the maximum value obtained over an evenly spaced grid of 500 points for $\theta \in [\theta^\star, \;50 \cdot \theta_{true}]$

Once an estimate for the hyper-parameter is obtained, we consider an SMC sampler targeting the posterior distribution $p^{\hat\theta_{\textrm{MAP}}}(\mu\mid \mathbf{y})$.

\subsubsection*{Fully Bayesian Approach}
For the FB approach we consider an SMC sampler targeting the posterior distribution $p(\mu,\theta\mid \mathbf{y})$, i.e. the hyper-parameter is sampled by the SMC sampler like all other parameters; the posterior distribution for the hyper-parameter is then obtained by marginalizing the joint distribution.

\subsection{Results}
\label{Sec:ReusltsToy}
\begin{figure*}
    \centering
    \includegraphics[width=\textwidth]{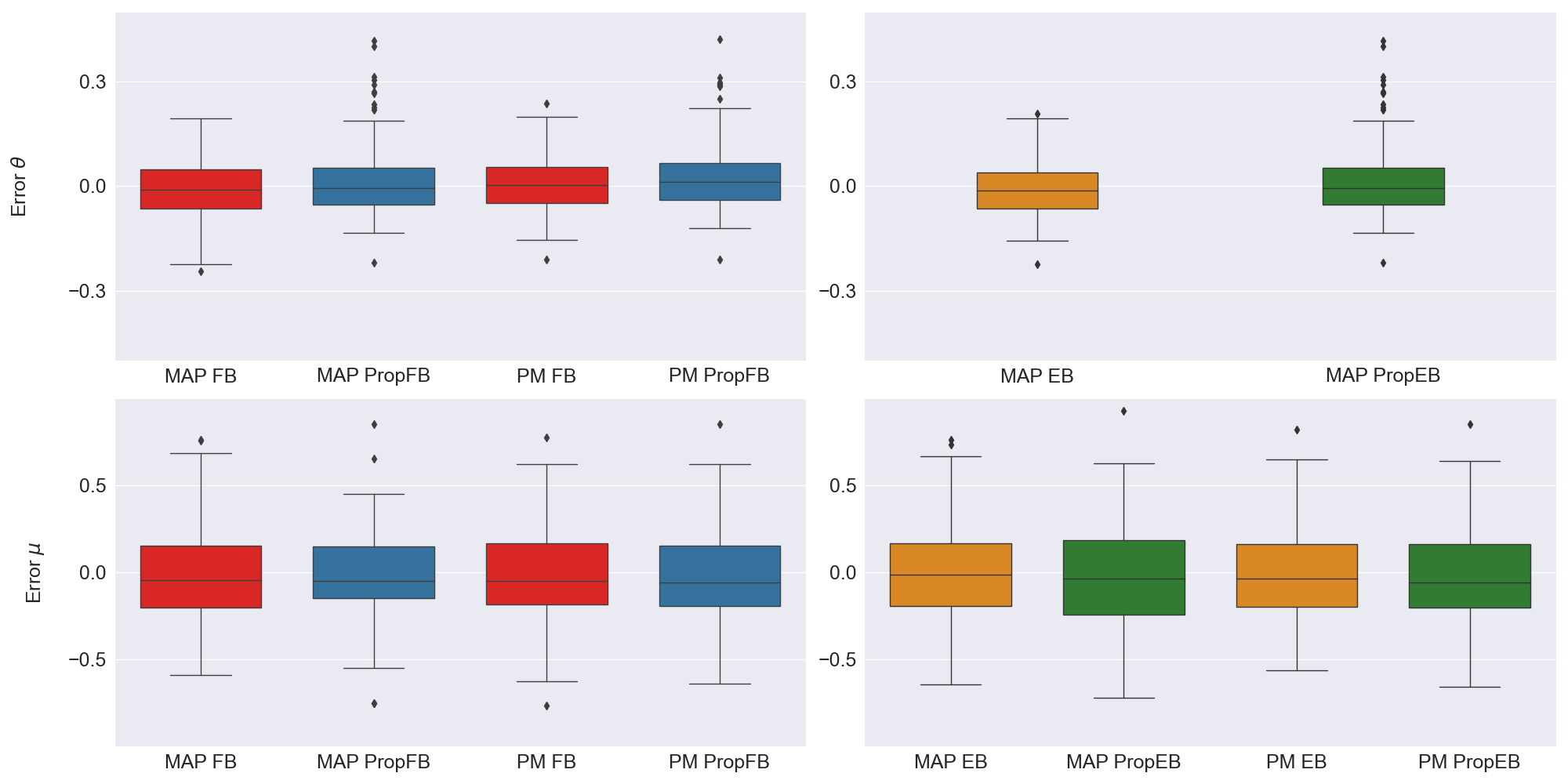}
    \caption{Box-plots for the error in the estimation of the hyper-parameter $\theta$ (first row), the error in the estimation of the parameter $\mu$ (second row).
      There are shown respectively in red, blue, yellow and green the error committed considering the MAP and PM estimates for the Fully Bayesian, Proposed Fully Bayesian, Empirical Bayes and Proposed Empirical Bayes approaches.}
       \label{Fig:ErrorParameters}
\end{figure*}

We analyze the performances in terms of selection of the parameter and hyper-parameter considering the Posterior Mean (PM) and the maximum a posteriori (MAP) estimators, and compute the Euclidean distance between the true and the estimated value of the (hyper-)parameter.

In Figure \ref{Fig:ErrorParameters} we report the boxplots obtained for the (hyper-)parameter estimation error over the 100 datasets:

\begin{itemize}
    \item in the first row (left panel) we show the error in the estimation of the hyper-parameter considering as estimate the MAP and the PM respectively for the FB approach and the PropFB one. 
    
    \item in the first row (right panel) we show the error in the estimation of the hyper-parameter considering as estimate the MAP for the EB and PropEB approach. Due to the structure of the proposed algorithm, the MAP estimate for the PropEB is the same obtained with the PropFB. 
    
    \item in the second row (left panel) we show the error in the estimation of the parameter committed respectively by the FB and PropFB approaches considering as estimate the MAP and the PM.
    
    \item in the second row (right panel) we show the error in the estimation of the parameter committed respectively by the EB and PropEB approaches.
\end{itemize}

We notice that the proposed approach features similar performances as the alternative approaches, either the EB and the FB, in terms of estimation error, while keeping a substantially lower computational cost (Fig. \ref{Fig:CpuTimeToy}). In the case of the FB approach, the proposed method also features a larger ESS (Fig. \ref{Fig:EssToy}).

\begin{figure}
    \centering    
    \includegraphics[width=\columnwidth]{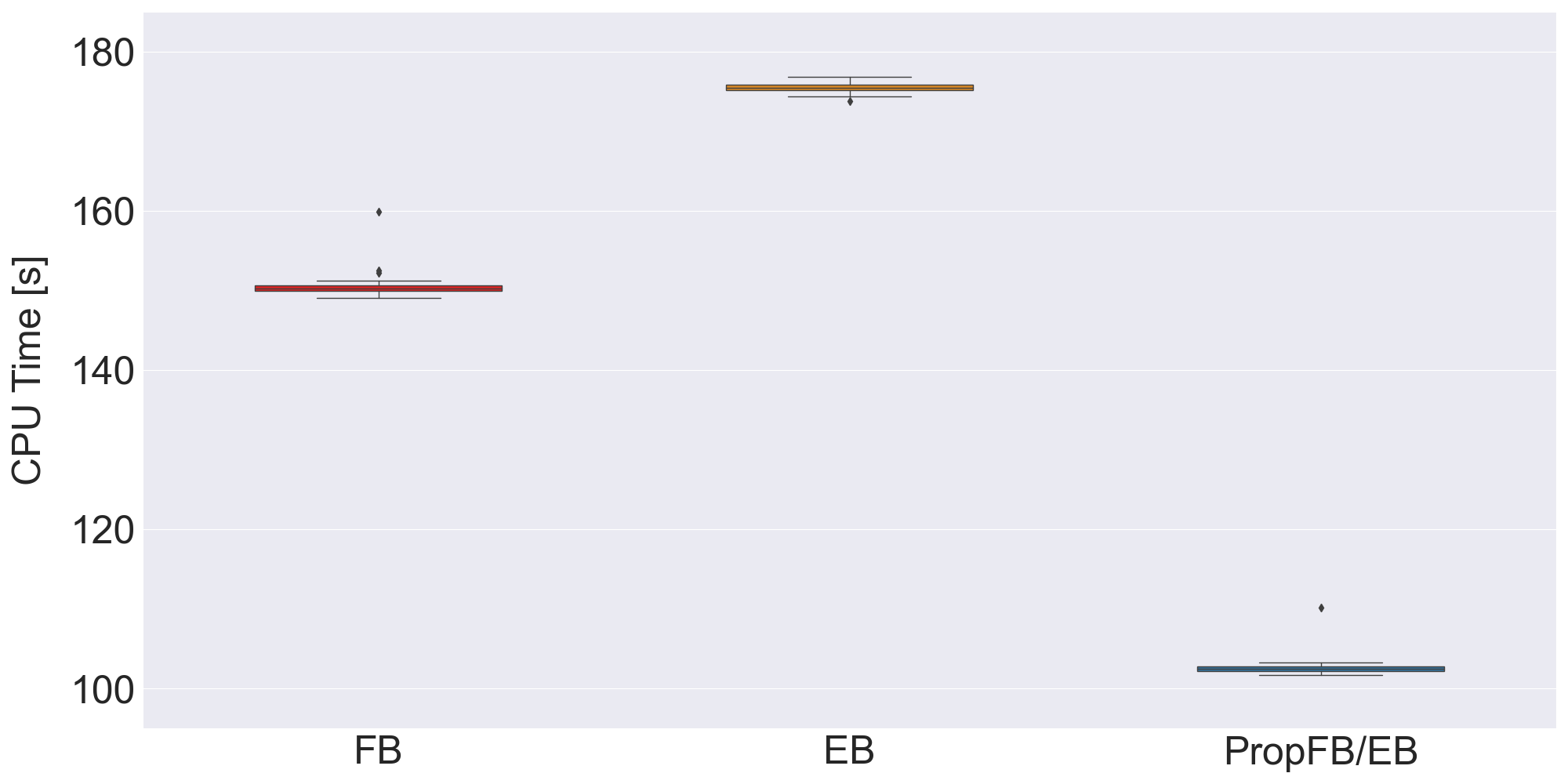}
    \caption{Computational time for the utilized methods. Performances are referred to a MacBook Pro (13-inch, M1, 2020) with 8 GB of memory.}
    \label{Fig:CpuTimeToy}
\end{figure}

\begin{figure}
    \centering    
    \includegraphics[width=\columnwidth]{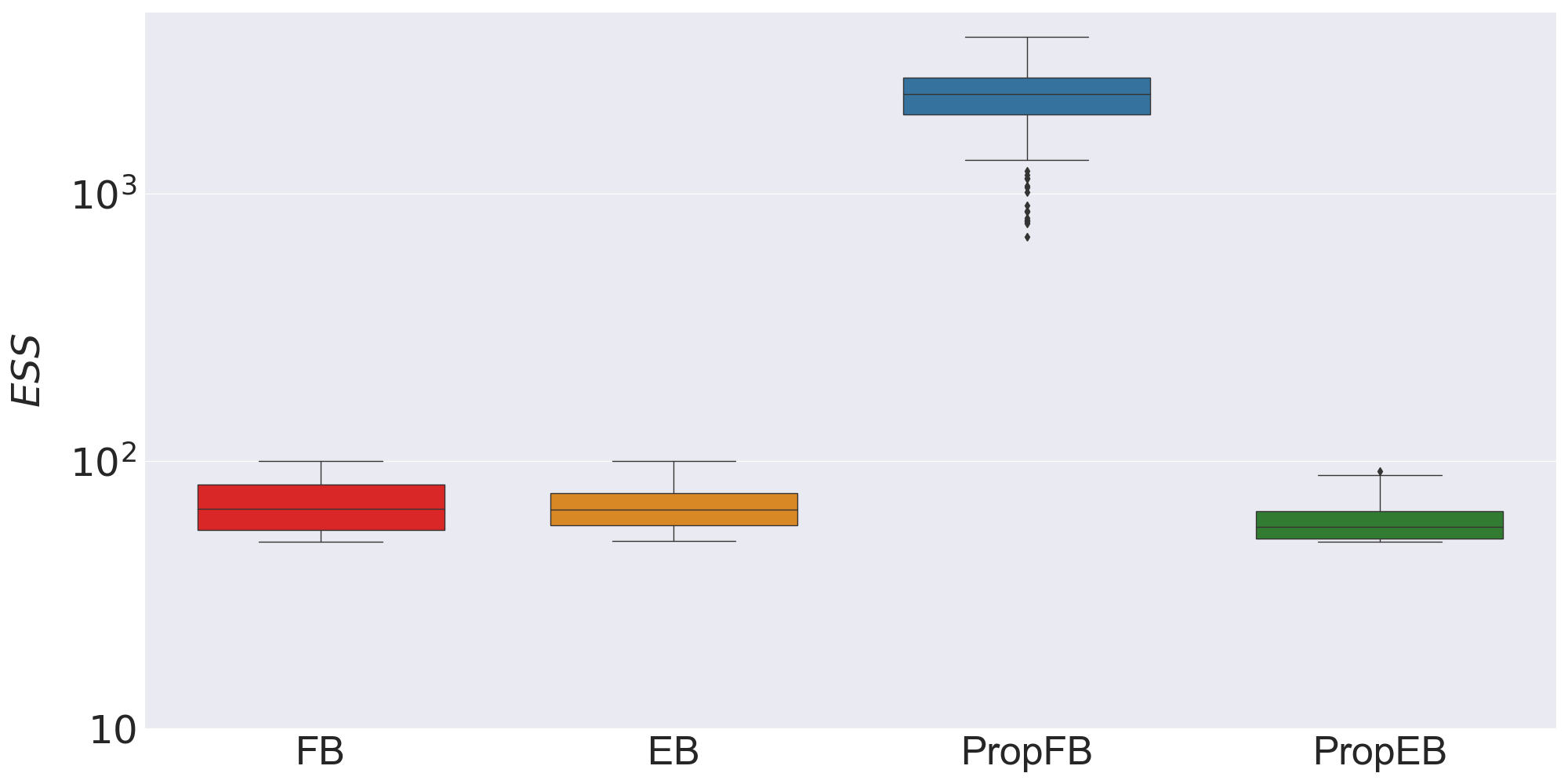}
    \caption{Effective Sample size respectively for the Fully Bayesian, Empirical Bayes, Proposed Fully Bayesian and Proposed Empirical Bayes approaches.}
    \label{Fig:EssToy}
\end{figure}

\begin{figure*}
    \centering    
    \includegraphics[width=\textwidth]{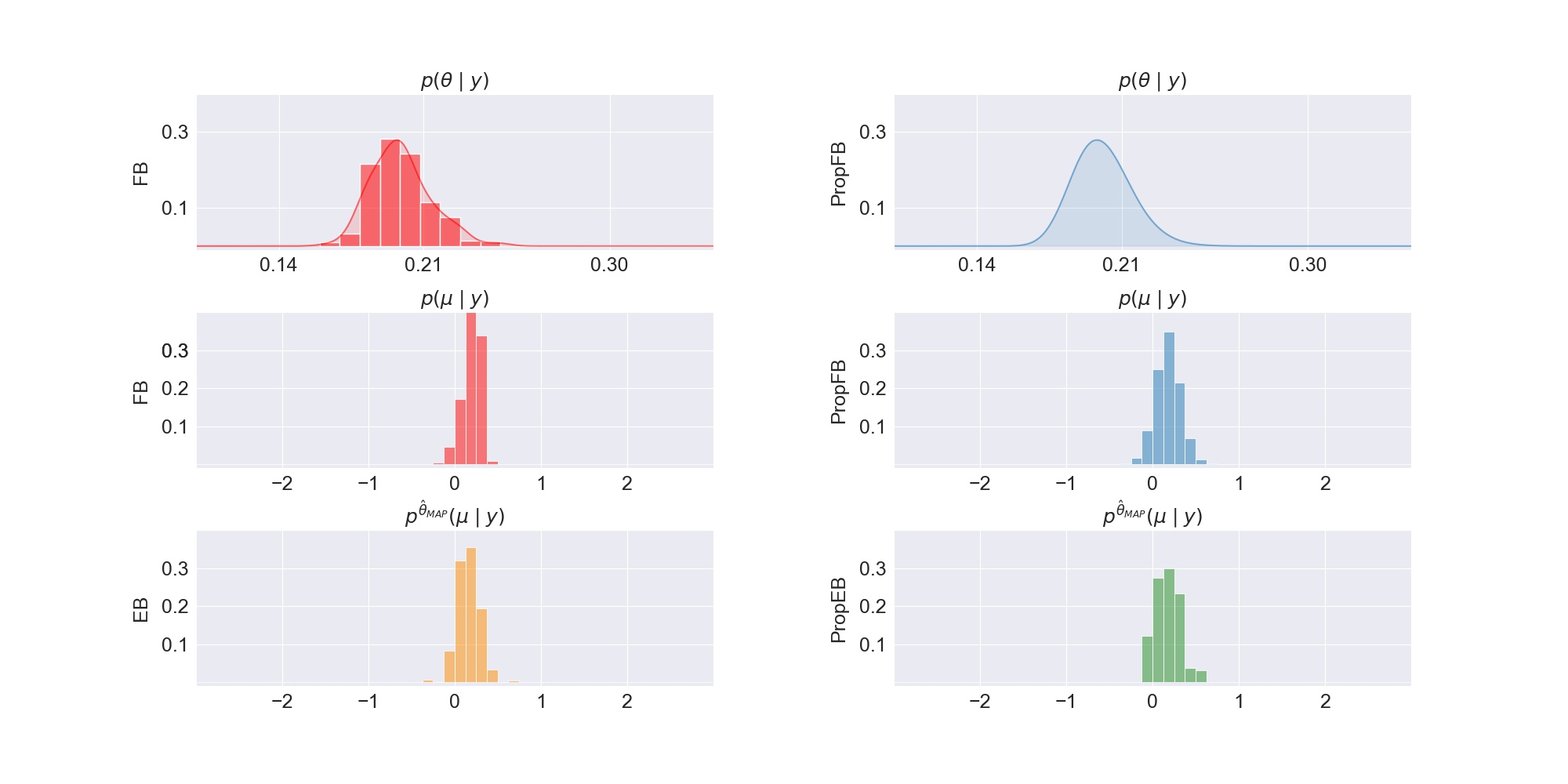}
    \caption{Illustrative example of the posterior for the hyper-parameter (first row), the marginal of the joint posterior for the parameter (second row) and the conditional posterior for the parameter (third row).}
    \label{Fig:OneExampleToy}
\end{figure*}

\subsubsection{Sample result}
For illustrative purposes, in this Section we show results from one specific dataset taken from the 100 simulations used in the previous Section. 

In Figure \ref{Fig:OneExampleToy} we show the output obtained by the proposed method and by the two alternative approaches, specifically by showing:
\begin{itemize}
    \item in the first row, the approximated posterior distribution for the hyper-parameter;
    \item in the second row, the approximated posterior distribution for the parameter obtained in a FB approach;
    \item in the third row, the approximated posterior distribution for the parameter in an EB approach.
\end{itemize}
As far as the approximation of the marginal posterior of the hyper-parameter is concerned, both approximations peak around the correct value, i.e. $\theta_{\textrm{true}}=0.20$. Regarding the approximations for the posterior of the parameter, we observe that all the approximated distributions peak at a positive value but contain the true value (zero) well within their support. .

\section{Application to source imaging in Magneto/ Electro-EncephaloGraphy}
\label{Sec:ApplicationMEG}
In this Section we present the results obtained with the application of the Rao-Blackwellized SMC samplers with the proposed method described in Section \ref{Sec:Selection/AveragingRaoBlackwell} for the resolution of the M/E-EG inverse problem \cite{sommariva2014sequential} introduced as motivating example in Section \ref{Sec:MotivatingExample}.

\begin{figure}
    \centering    
    \includegraphics[width=\columnwidth]{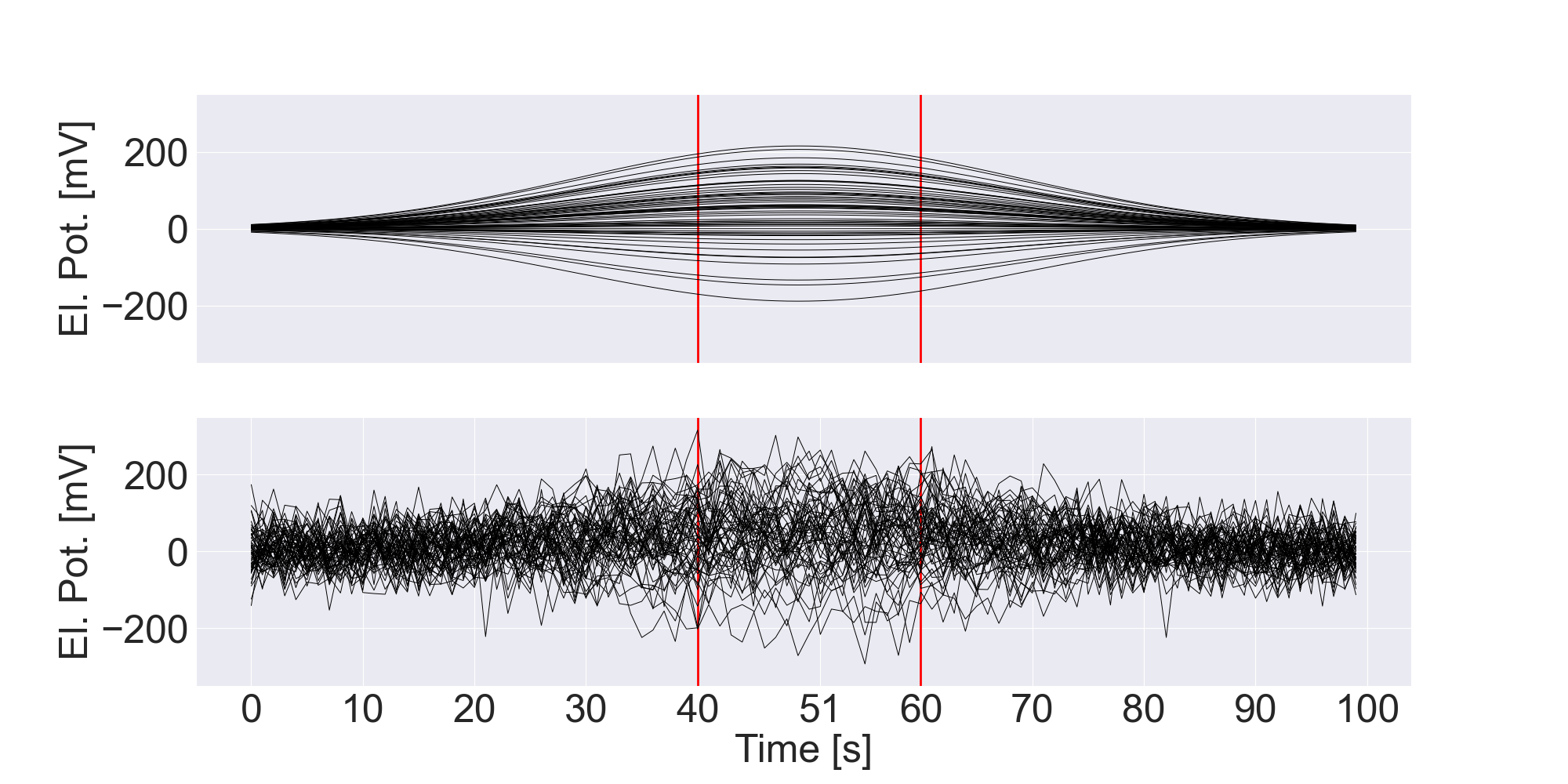}
    \caption{Example of the simulated data used for the analysis as recordings performed by $59$ EEG channels each second from $0$ to $100$. In the first row of the plot we show the synthetic data without noise while in the second row we show the same data with the addition of additive Gaussian noise. The region between the red vertical lines is that which is observed.}
    \label{Fig:DataMEG}
\end{figure}

\begin{figure*}
    \centering    
    \includegraphics[width=\textwidth]{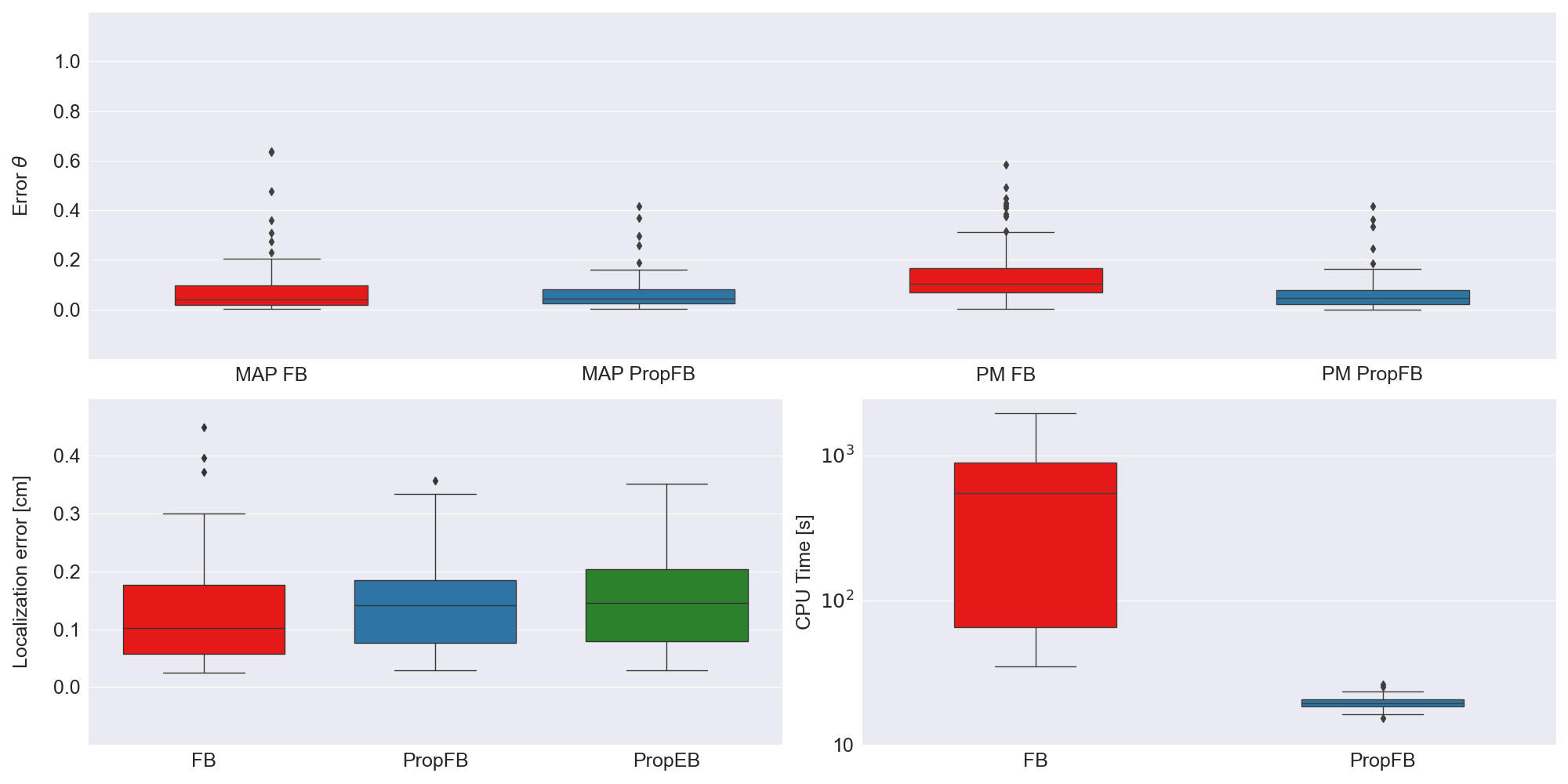}
    \caption{Relative error for the approximation of the hyper-parameter $\theta$ (first row), the boxplots for the OSPA metric in centimeters (second row, left panel) and the computational time in seconds (second row, right panel)  considering the Fully Bayesian (red), the Proposed Fully Bayesian (blue) and the Proposed Empirical Bayes (green) approaches. The computational time is referred to a MacBook Pro (13-inch, M1, 2020) with 8 GB of memory.}
    \label{Fig:errors_MEG}

\end{figure*}

\subsection{Data Generation}
Data $\mathbf{y}=(y(1),\dots,y(T))$ are generated with the following configuration:
\begin{itemize}
    \item brain discretization $\Omega$ with $8193$ voxels;
    \item number of M/E-EG channels: $59$;
    \item number of dipoles: $d=2$;
    \item dipole position $r_i$: randomly drawn, with uniform distribution among the voxels, with the constraint that the distance between the two dipoles is larger than 3cm; the constraint was set in order to allow for identifiability of the two dipoles;
    \item dipole moment $q_i$: orientation chosen among the three orthogonal directions, as the one that maximizes signal strength; unit dipole strength;
    \item noise standard deviation: $\theta_{\textrm{true}}\sim \mathcal{U}[1, 100]$.
\end{itemize}
With these settings, we generate 50 independent realizations of the dataset in order to test the proposed algorithm; Figure \ref{Fig:DataMEG} shows one example of the obtained data.

\subsection{Prior and likelihood}
We assume that all parameters are a priori independent, being $x = (d, \lambda, r_{1:d})$, the prior density is therefore
\begin{equation}
    p(x) = p(d)p(\lambda) \prod_{i=1}^{d} p(r_i),
    \label{Eq:prod_prior_MEG}
\end{equation}

where we specify:
\begin{itemize}
    \item $p(d)\sim \textrm{Poisson}(1)$;
    \item $p\left(\log(\lambda)\right)\sim \mathcal{U}\left(\left[-8,\;-5\right]\right)$;
    \item $p(r_i)\sim\mathcal{U}\left(\Omega\right)$.
\end{itemize}

We assume that noise is not correlated in time, corresponding to conditional independence between data recorded at different time points; the likelihood thus factorizes
    \begin{equation}
        p^{\theta}(\mathbf{y}\mid x) = \prod_{t=1}^T p^{\theta}(y(t) \mid d, \lambda, r_{1:d}).
        \label{Eq:prod_likelihood_MEG}
    \end{equation}

\subsection{Algorithm settings}

Each SMC sampler was applied with the following settings:
\begin{itemize}
    \item analysis window corresponding to the interval $[40,60]$, as shown in figure \ref{Fig:DataMEG}, i.e. analysis windows centered in the peak of the signal;
    \item number of particles set to $100$ as a compromise between performances and quality of the approximation;
    \item $\theta^\star=\min\{\theta_{\textrm{true}}\}/2$; this allows the true value $\theta_{\textrm{true}}$ to be within the range of values explored by the Proposed method during SMC sampler iterations; the order of magnitude of noise is typically known in this kind of data, therefore it would not be difficult to apply a similar reasoning to experimental data;
     
    \item number of iterations set to $100$, with the sequence of exponents growing exponentially in order to guarantee a smooth transit between intermediate distributions;
    \item resampling step performed by means of systematic resampling \cite{douc2005comparison} whenever the effective sample size is lower than half of the number of particles;
    \item MCMC kernels as described in \cite{sommariva2014sequential}.
\end{itemize}

\begin{figure*}
    \centering    
    \includegraphics[width=\textwidth]{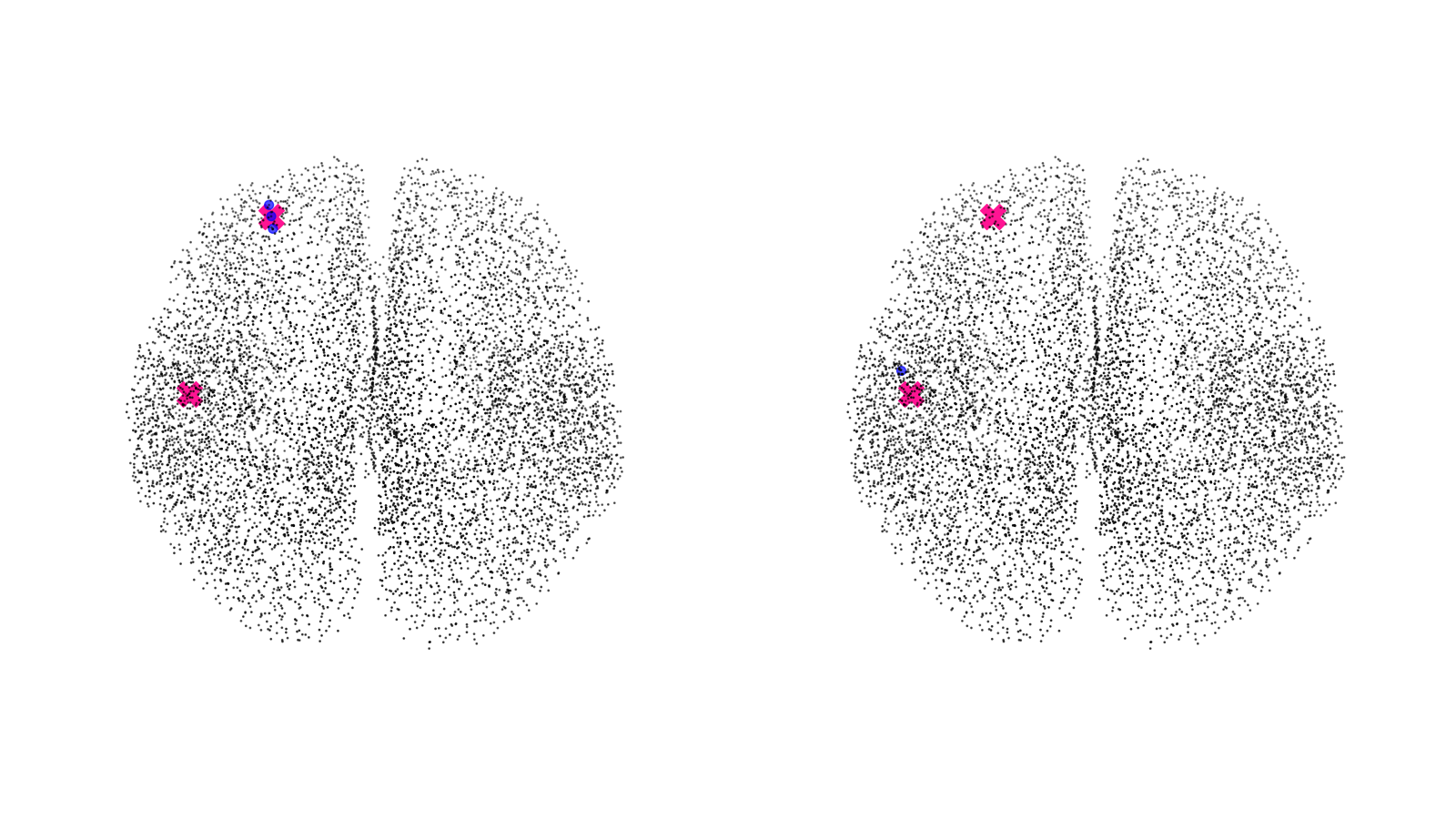}
    \caption{Posterior probability maps for source localization obtained with the Fully Bayesian (left panel) and the Proposed Fully Bayesian (right panel) approaches. Results are visualized on a discretized brain as black dots, the blue dots represent the high probability regions while the purple cross are the estimated dipoles.}
    \label{Fig:ExampleSolutionMEG}
\end{figure*}

\subsection{Performance metrics}
We consider the performances in terms of selection of the hyper-parameter and in terms of localization of current dipoles.

The estimates considered for the hyper-parameter are the MAP and the PM of the marginal posterior $p(\theta \mid \mathbf{y})$, while the estimates for the number and the localization are defined as: 
\begin{itemize}
    \item estimator for number of dipoles: $\hat{d} = \textrm{arg\,max}_{d \in \mathbb{N}}\left(p(d \mid \mathbf{y})\right)$
    \item estimator for dipole location: we construct $\hat{d}$ clusters and than obtain $\hat{r}_{i}$, for $i=1,\dots,\hat{d}$, as the peak of the marginal posterior $p(r \mid \mathbf{y}, \hat{d})$ in the $i$-th cluster.
\end{itemize}
We note that the location estimates are a little-nonstandard in the statistics literature, but this strategy is widespread in the mutiple-object tracking literature (see, e.g., \cite{sorrentino2013dynamic}) as a natural solution to the label-switching problem in this context. 

As the number of dipoles is estimated from the data, the true and estimated number of dipoles might differ; for this reason, in order to evaluate the localization error we consider the  Optimal Sub-Pattern Assignment (OSPA) metric \cite{ristic2011metric}, defined as follows:

\begin{equation}
    OSPA(\hat{r}_{1:\hat{d}}, r_{1:d}) = \min_{\phi}\sum_{i=1}^{\min\{\hat{d}, d\}} \|\hat{r}_{i} - r_{\phi(i)}\|
    \label{Eq:OSPA}
\end{equation}
where the minimum is taken over all possible permutations, $\phi$, of $\{1,\ldots,d\}$.

\subsection{Results}
In Figure \ref{Fig:errors_MEG} we report the boxplots obtained for the hyper-parameter estimation and the localization error over the 50 datasets:

\begin{itemize}
    \item in the first row we show the error in the estimation of the hyper-parameter considering as estimate the MAP and the PM respectively for the FB approach and the PropFB one;
    \item in the left panel of the second row we show the OSPA metric respectively for the FB, PropFB and PropEB approaches;
    
    \item in the right panel of the second row we report the CPU time.
\end{itemize}

Our results indicate that the proposed approach performs slightly better than the alternative in terms of hyper-parameter estimation, while localization error as measured by the OSPA metric is not significantly different. The computational cost of the proposed approach is considerably lower than the one of the alternative approach; the difference is much more evident than in the case of the toy example. This difference can be explained by the combined effect of the variable dimension model, i.e. the SMC sampler exploring spaces with different number of sources, and the sampling of the hyper-parameter: when the sampled hyper-parameter gets small, the SMC sampler tends to prefer configurations with larger number of sources whose likelihood calculation is more expensive.
 
\begin{figure}
    \centering
    \includegraphics[width=\columnwidth]{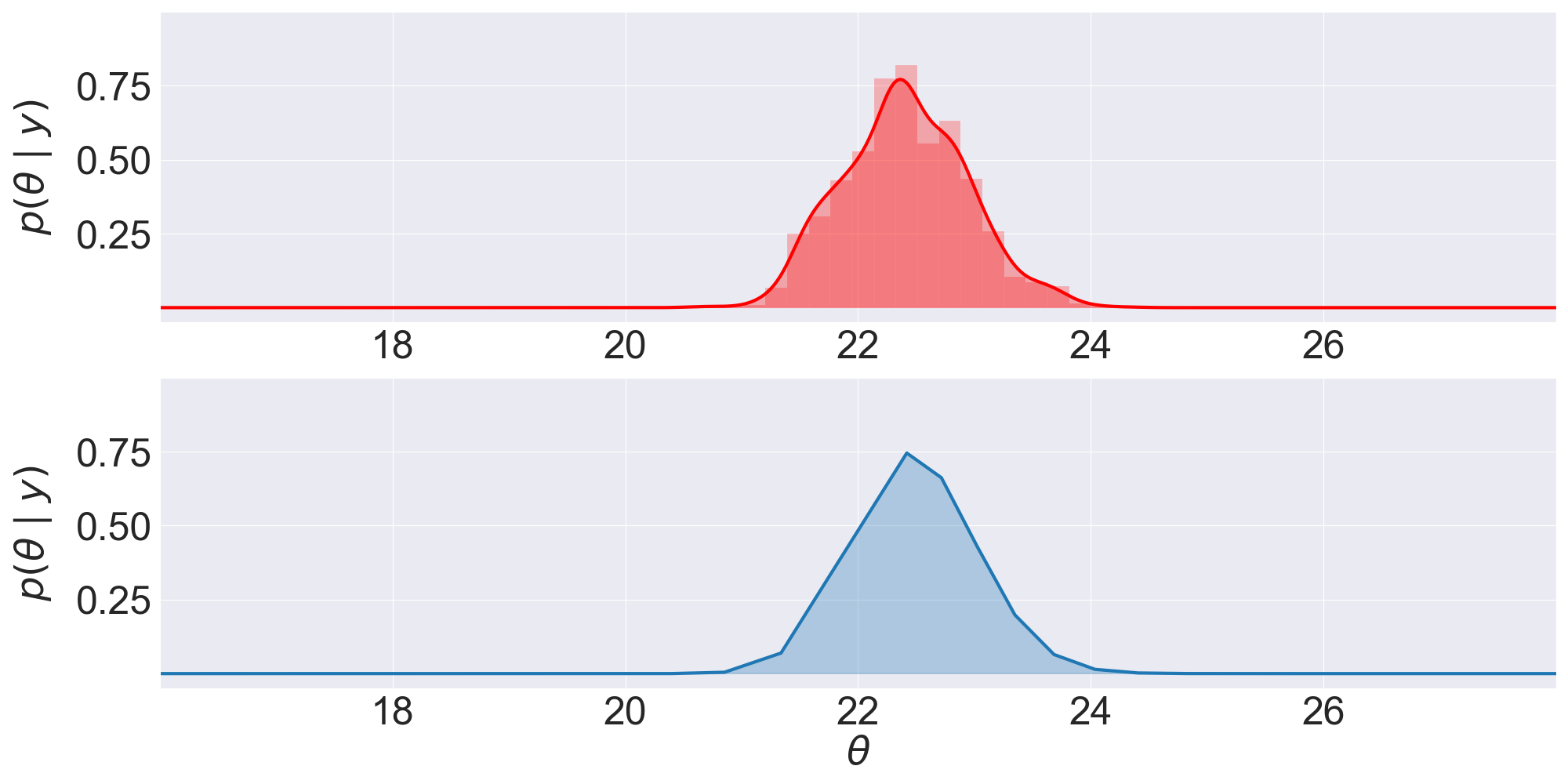}
    \caption{The figure shows the posterior for the parameter $\theta$ approximated with the Fully Bayesian approach (red) and with the Proposed Fully Bayesian (blue).}
    \label{fig:PosteriorThetaMEG}
\end{figure}

\subsubsection{Sample result}
For illustrative purposes, in this Section we show results from one specific dataset taken from the 50 simulations used in the previous Section.

In Figure \ref{Fig:ExampleSolutionMEG} we show the posterior distribution for the source location $p(r \mid \mathbf{y}, \hat{d})$ approximated respectively by the FB (left panel) and the PropFB (right panel) approaches. Both posterior are reciprocally similar and both methods estimate two sources, one in the left and one in the right hemisphere.

In Figure \ref{fig:PosteriorThetaMEG} we show the approximated posterior distributions for the hyper-parameter provided by the two algorithms. Again we can observe that the two approximations are similar to each other and peaked around the correct value. $\theta_{\textrm{true}}=22$.

\section{Conclusions}
\label{Sec:Conclusions}
We presented a method that allows to perform at the same time and with a limited computational cost Fully Bayesian and Empirical Bayes approaches.

Experiments show that the method performs slightly better than the natural alternatives, but with important differences.
The proposed approach is more versatile in several ways: it allows to compute maximum likelihood/a posteriori estimates of the hyper-parameter; it allows to recycle the SMC samples for a different hyper-prior; it allows hyper-parameter selection via marginal maximum likelihood, and to provide estimates of the unknown for a specific value of the hyper-parameter. In addition, when it comes to averaging across different values of the hyper-parameter, it provides substantially more Monte Carlo samples, potentially allowing better approximations of the posterior and resulting in better estimates of the unknowns.

Importantly, all these advantages are obtained essentially for free, i.e. at no additional computational cost; in addition, the proposed approach exploits samples at all iterations, thus simultaneously overcoming one of the known limitations of SMC samplers, i.e. the fact that intermediate samples are usually discarded.

Finally, although this article is dedicated to exploiting the particular structure present in a class of problems with univariate hyper-parameters in a way which yields both standard and empirical Bayesian estimates simultaneously with little overhead, it also suggests a path to efficiently performing empirical Bayesian estimation in a broader class of models. By estimating the gradient of the marginal likelihood with respect to the hyperparameter using the current particle set would in principle allow the adaptive specification of a sequence of hyper-parameter values (and hence posterior distributions) which converges towards that which maximises the marginal likelihood. This is beyond the scope of this manuscript but provides an interesting avenue for future exploration.

\bmhead{Acknowledgments}
AMJ acknowledges support from the EPSRC (grant numbers  EP/R034710/1 and EP/T004134/1) and the Lloyd's Register Foundation Programme on Data-Centric Engineering at the Alan Turing Institute.
AV and AS kindly acknowledge Gruppo Nazionale per il Calcolo Scientifico (GNCS) for partial support. 

\begin{appendices}
\section{}
\label{secA1}

\begin{proposition} 
\label{Prop:NEF-Exponential}
Let $p^{\theta} \in NEF$ with sufficient statistic $T$ and canonical parameter $\theta$ s.t. $p^{\theta}(x) = \exp(\theta T(x)-A_{\theta})$
and $\alpha\neq0$, then: 
\small
\begin{equation*}
    [p^{\theta}(x)]^{\alpha} = \exp(A_{\alpha\theta}-\alpha A_{\theta}) p^{\alpha\theta}(x)
\end{equation*}
\end{proposition}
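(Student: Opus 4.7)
My plan is to prove the identity by direct computation, since the statement is essentially a tautology once the exponential family form is written out.

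The first step is to raise the given density to the $\alpha$ power:
\begin{equation*}
[p^{\theta}(x)]^{\alpha} = \exp\bigl(\alpha\theta T(x) - \alpha A_{\theta}\bigr).
\end{equation*}
The second step is to recognize the exponent $\alpha\theta T(x)$ as the canonical inner product associated with the NEF density at parameter $\alpha\theta$; to match the form of $p^{\alpha\theta}(x) = \exp(\alpha\theta T(x) - A_{\alpha\theta})$, I would simply add and subtract $A_{\alpha\theta}$ inside the exponential and split the result into two factors:
\begin{equation*}
\exp\bigl(\alpha\theta T(x) - \alpha A_{\theta}\bigr) = \exp\bigl(A_{\alpha\theta} - \alpha A_{\theta}\bigr)\exp\bigl(\alpha\theta T(x) - A_{\alpha\theta}\bigr),
\end{equation*}
which is exactly $\exp(A_{\alpha\theta} - \alpha A_{\theta})\, p^{\alpha\theta}(x)$.

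There is really no hard step. The only caveat worth flagging is that the identity presupposes that $\alpha\theta$ lies in the natural parameter space of the family, i.e. that $A_{\alpha\theta}$ is finite; otherwise $p^{\alpha\theta}$ is not defined as a probability density, though the algebraic identity still holds formally with $A_{\alpha\theta} = \log\int \exp(\alpha\theta T(x))\,dx$ interpreted as an extended real number. Since in the applications later in the paper $\alpha\in(0,1]$ and $\theta$ is chosen so that the tempered sequence remains in the exponential family, this technicality does not obstruct any subsequent use of the proposition.
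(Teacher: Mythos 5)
Your proof is correct and is essentially identical to the paper's own one-line argument: raise the density to the power $\alpha$, then insert $A_{\alpha\theta}$ to recognize $p^{\alpha\theta}(x)$ times the constant $\exp(A_{\alpha\theta}-\alpha A_{\theta})$. The extra remark about $\alpha\theta$ needing to lie in the natural parameter space is a reasonable caveat but does not change the argument.
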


\begin{proof}
$$[p^{\theta}(x)]^{\alpha} = \exp(\alpha\theta T(x)-\alpha A_{\theta}) = \exp(A_{\alpha\theta}-\alpha A_{\theta}) p^{\alpha\theta}(x)$$
\end{proof}
\normalsize

\begin{corollary} 
\label{Cor:exponential_gaussian}
Let $p(x\mid\sigma) = \mathcal{N}(x ;\; \mu,\sigma^2\Gamma)$ be an m-dimensional Gaussian density, then for any $\alpha\neq0$:
\small
\begin{equation*}
    p(x\mid\sigma)^{\alpha} = \sqrt{\left((2\pi)^m \det(\sigma^2\Gamma)\right)^{1-\alpha}\alpha^{-m}} p\left(x\;\mid\;\frac{\sigma}{\sqrt{\alpha}}\right).
\end{equation*}
\end{corollary}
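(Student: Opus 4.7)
The corollary is a specialization of Proposition~\ref{Prop:NEF-Exponential} to the Gaussian family, and my plan is to apply that proposition directly. First I would recast $p(\cdot\mid\sigma)$ in canonical NEF form, holding $\mu$ and $\Gamma$ fixed and taking as the scalar natural parameter $\theta = 1/\sigma^2$ with sufficient statistic $T(x) = -\tfrac12(x-\mu)^\top\Gamma^{-1}(x-\mu)$; this gives log-partition function $A_\theta = \tfrac{m}{2}\log(2\pi) - \tfrac{m}{2}\log\theta + \tfrac12\log\det\Gamma$. Because $\alpha\theta = \alpha/\sigma^2 = 1/(\sigma/\sqrt\alpha)^2$, the Gaussian with natural parameter $\alpha\theta$ is exactly $p(\cdot\mid\sigma/\sqrt\alpha)$, and Proposition~\ref{Prop:NEF-Exponential} immediately yields
\[
p(x\mid\sigma)^\alpha = \exp\!\bigl(A_{\alpha\theta} - \alpha A_\theta\bigr)\, p(x\mid \sigma/\sqrt\alpha).
\]

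The remaining task is to match the prefactor $\exp(A_{\alpha\theta}-\alpha A_\theta)$ to the claimed $\sqrt{((2\pi)^m\det(\sigma^2\Gamma))^{1-\alpha}\alpha^{-m}}$. This is a short explicit calculation: the $\theta$-free part of $A_\theta$ contributes $(1-\alpha)$ copies of itself, while the $-\tfrac{m}{2}\log\theta$ piece produces $-\tfrac{m}{2}\log\alpha + \tfrac{m}{2}(\alpha-1)\log\theta$; regrouping via $-\log\theta = \log\sigma^2$ and the identity $\det(\sigma^2\Gamma) = \sigma^{2m}\det\Gamma$ collapses everything into $\tfrac12(1-\alpha)\log((2\pi)^m\det(\sigma^2\Gamma)) - \tfrac{m}{2}\log\alpha$, whose exponential is precisely the displayed factor.

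A fully elementary alternative would bypass the NEF machinery and work directly with the explicit Gaussian density: raising it to the $\alpha$-th power turns the covariance into $\sigma^2\Gamma/\alpha$ inside the exponent, and the prefactor follows by comparing the normalizing constants and using $\det(\sigma^2\Gamma/\alpha) = \alpha^{-m}\det(\sigma^2\Gamma)$. I do not expect any genuine obstacle; the only hazard is miscollecting powers of $\alpha$, $(2\pi)^m$, or $\det\Gamma$ during this bookkeeping, so I would prefer the NEF-based derivation because it keeps all arithmetic inside the single scalar function $A_\theta$ and makes the corollary read as an immediate consequence of the preceding proposition.
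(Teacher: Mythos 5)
Your proposal is correct and follows essentially the same route as the paper: both recognize the Gaussian family as a NEF in $\theta = 1/\sigma^2$ with sufficient statistic $T(x) = -\tfrac12(x-\mu)^t\Gamma^{-1}(x-\mu)$ and $\exp(A_\theta) = ((2\pi)^m\det(\Gamma/\theta))^{1/2}$, then apply Proposition~\ref{Prop:NEF-Exponential} and compute $\exp(A_{\alpha\theta}-\alpha A_\theta)$ to obtain the stated prefactor. Your bookkeeping of the powers of $\alpha$, $(2\pi)^m$ and $\det\Gamma$ matches the paper's calculation.
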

\normalsize
\begin{proof} The proof follows by the consideration that the family of the considered densities is a NEF with:
\small
\begin{itemize}
    \item $\theta = \frac{1}{\sigma^2}$
    \item $T(x) = -\frac{1}{2}(x-\mu)^t\Gamma^{-1}(x-\mu)$
    \item $\exp(A_{\theta}) = \left((2\pi)^m \det\left(\frac{1}{\theta}\Gamma\right)\right)^{\frac{1}{2}}$
\end{itemize}
\normalsize
Therefore the previous result guarantees the thesis because of the normalizing constant is given by
\small
\begin{equation*}
    \begin{split}
        \exp&(A_{\alpha\theta}-\alpha A_{\theta})=\\ & 
        \sqrt{\left((2\pi)^m \det\left(\frac{1}{\alpha\theta}\Gamma\right)\right) \left((2\pi)^m \det\left(\frac{1}{\theta}\Gamma\right)\right)^{-\alpha}}= \\ & 
        \sqrt{\left((2\pi)^m \det\left(\frac{1}{\theta}\Gamma\right)\right)^{1-\alpha}\alpha^{-m}}= \\ & 
        \sqrt{\left((2\pi)^m \det(\sigma^2\Gamma)\right)^{1-\alpha}\alpha^{-m}}
    \end{split}
\end{equation*}
\end{proof}

\normalsize
\begin{proposition} 
\label{Prop:marginal_like}
Let $p(x_1\mid\lambda) \sim \mathcal{N}(\eta,\Gamma_{\lambda})$ be an $m$-dimensional Gaussian density and consider a $k$-dimensional Gaussian density $p^{\theta}(y\mid x_1,x_2,\lambda) \sim \mathcal{N}(\mu(x_2)x_1,\Sigma_{\theta})$, assuming that $x_1$ is independent of $x_2$:
\small
\begin{equation*}
\begin{split}
    &p(x_1\mid\lambda)p^{\theta}(y \mid x_1,x_2,\lambda) = \\& 
    \mathcal{N}\left(
    \begin{bmatrix}
      x_1 \\
      y
    \end{bmatrix} ;
    \;
    \begin{bmatrix}
\eta \\
\eta\mu(x_2)
\end{bmatrix}
,
\begin{bmatrix}
\Gamma_\lambda && \Gamma_\lambda \mu(x_2)^t \\
\mu(x_2)\Gamma_\lambda && \Sigma_{\theta}+\mu(x_2)\Gamma_\lambda \mu(x_2)^t
\end{bmatrix}\right),
\end{split}
\label{Proof:part_a}
\end{equation*}
\begin{equation*} 
p^{\theta}(y \mid x_2,\lambda) \sim \mathcal{N}\left(y;\; \eta\mu(x_2), \Sigma_{\theta}+\mu(x_2)\Gamma_\lambda \mu(x_2)^t\right).\label{Proof:part_b} 
\end{equation*}
\end{proposition}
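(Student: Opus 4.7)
The plan is to derive the joint density by exploiting the fact that $(x_1, y)$ is obtained from the independent Gaussian pair $(x_1, \varepsilon)$ via an affine transformation, and then to read off the marginal of $y$ as a submarginal of the joint.

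First, I would observe that the statement $p^\theta(y\mid x_1,x_2,\lambda)\sim\mathcal{N}(\mu(x_2)x_1,\Sigma_\theta)$ is equivalent to writing
\begin{equation*}
y = \mu(x_2)x_1 + \varepsilon,\qquad \varepsilon\sim\mathcal{N}(0,\Sigma_\theta),
\end{equation*}
with $\varepsilon$ independent of $x_1$ (and, by hypothesis, of $x_2$). Conditioning throughout on $(x_2,\lambda)$, the pair $(x_1,\varepsilon)$ is jointly Gaussian with block-diagonal covariance $\operatorname{diag}(\Gamma_\lambda,\Sigma_\theta)$ and mean $(\eta,0)$. Since
\begin{equation*}
\begin{bmatrix}x_1\\ y\end{bmatrix}
=
\begin{bmatrix}I & 0\\ \mu(x_2) & I\end{bmatrix}
\begin{bmatrix}x_1\\ \varepsilon\end{bmatrix},
\end{equation*}
the vector $(x_1,y)$ is jointly Gaussian as the image of a Gaussian under an affine map.

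Next, I would compute the mean and covariance blocks by routine linearity:
\begin{equation*}
\mathbb{E}[x_1]=\eta,\qquad \mathbb{E}[y]=\mu(x_2)\eta,
\end{equation*}
\begin{equation*}
\operatorname{Var}(x_1)=\Gamma_\lambda,\qquad \operatorname{Cov}(x_1,y)=\Gamma_\lambda\mu(x_2)^t,
\end{equation*}
\begin{equation*}
\operatorname{Var}(y)=\mu(x_2)\Gamma_\lambda\mu(x_2)^t+\Sigma_\theta,
\end{equation*}
which gives exactly the joint covariance written in the statement. Assembling mean and covariance yields the first identity.

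Finally, the marginal on $y$ follows immediately: a marginal of a multivariate Gaussian is Gaussian with the corresponding sub-mean and sub-covariance, so
\begin{equation*}
p^\theta(y\mid x_2,\lambda)\sim\mathcal{N}\!\left(\mu(x_2)\eta,\ \Sigma_\theta+\mu(x_2)\Gamma_\lambda\mu(x_2)^t\right).
\end{equation*}
There is no real obstacle here; the only point that needs care is to keep the conditioning on $(x_2,\lambda)$ explicit so that $\mu(x_2)$ is treated as a deterministic matrix, which is exactly what the independence assumption of $x_1$ and $x_2$ secures. A purely algebraic alternative, completing the square in the exponent of $p(x_1\mid\lambda)p^\theta(y\mid x_1,x_2,\lambda)$ and integrating out $x_1$, would reach the same conclusion but requires a Schur-complement identity that is avoided entirely by the linear-transformation argument above.
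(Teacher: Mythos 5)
Your proof is correct, but it takes a genuinely different route from the paper. You argue probabilistically: writing $y=\mu(x_2)x_1+\varepsilon$ with $\varepsilon\sim\mathcal{N}(0,\Sigma_\theta)$ independent of $x_1$, you note that $(x_1,y)$ is an affine image of the jointly Gaussian pair $(x_1,\varepsilon)$, compute the mean and covariance blocks by linearity, and obtain the marginal of $y$ by reading off the corresponding sub-blocks; the factorization $p(x_1\mid\lambda)\,p^{\theta}(y\mid x_1,x_2,\lambda)=p(x_1,y\mid x_2,\lambda)$, justified by the independence of $x_1$ and $x_2$, then gives the stated identity. The paper instead proceeds by the algebraic path you flag as the alternative: it expands the product of the two densities, completes the square to identify the joint precision matrix, verifies via a block-matrix inversion (Schur-complement) identity that this precision is the inverse of the claimed joint covariance, checks that the determinant of that covariance equals $\det(\Sigma_\theta)\det(\Gamma_\lambda)$ so the normalizing constants agree, and only then integrates out $x_1$. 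Your argument is shorter and avoids the matrix-inversion and determinant bookkeeping entirely; the paper's computation has the side benefit of making the joint precision matrix and normalizing constant explicit, which is the kind of information reused when writing down the conditional posterior $p^{\theta}(x_1\mid x_2,\lambda,y)$ in the Rao-Blackwellized sampler, but for the proposition as stated your linear-transformation argument is fully sufficient.
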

\normalsize
\begin{proof}
Without loss of generality we assume that $\eta = 0$, therefore the product of the Gaussian densities turns out to be
\small
\begin{equation*}
        \begin{split}
            p&(x_1\mid\lambda)p^{\theta}(y \mid x_1,x_2,\lambda) \\ 
            & \propto \exp\left((y-\mu(x_2)x_1)^t\Sigma_{\theta}^{-1}(y-\mu(x_2)x_1)+x_1^t\Gamma_\lambda^{-1}x_1\right)\\
            & =\exp\bigl(y^t\Sigma_{\theta}^{-1}y - x_1^t\mu(x_2)^t\Sigma_{\theta}^{-1}y + x_1^t\mu(x_2)^t\Sigma_{\theta}^{-1}\mu(x_2)x_1 \\
            & - y^t\Sigma_{\theta}^{-1}\mu(x_2)x_1 + x_1^t\Gamma_\lambda^{-1}\Sigma_{\theta}\bigl)\\
            & =\exp\bigl(y^t\Sigma_{\theta}^{-1}y - x_1^t\mu(x_2)^t\Sigma_{\theta}^{-1}y+x_1^t(\mu(x_2)^t\Sigma_{\theta}^{-1}\mu(x_2) \\
            & + \Gamma_\lambda^{-1})x_1 - y^t\Sigma_{\theta}^{-1} \mu(x_2)x_1\bigl)\\
            & =\exp\biggl(\begin{bmatrix}
                        x_1\\
                        y
                        \end{bmatrix}^t
                         \\ &
                        \begin{bmatrix}
                        \Gamma_\lambda^{-1}+\mu(x_2)^t\Sigma_{\theta}^{-1}\mu(x_2) && -\mu(x_2)^t\Sigma_{\theta}^{-1} \\
                         -\Sigma_{\theta}^{-1}\mu(x_2) && \Sigma_{\theta}^{-1}
                        \end{bmatrix}
                        \begin{bmatrix}
                        x_1\\
                        y
                        \end{bmatrix}\biggl)
         \end{split}
\end{equation*}
\normalsize
with the normalizing constant

\begin{equation}
\small
\begin{split}
\biggl((2\pi)^{m}\det(\Sigma_{\theta})& (2\pi)^{k}\det(\Gamma_\lambda)\biggl)^{-\frac{1}{2}}=\\ &\left((2\pi)^{m+k}\det(\Sigma_{\theta})\det(\Gamma_\lambda)\right)^{-\frac{1}{2}}.
\end{split}
\end{equation}

\normalsize
If we consider the multivariate normal density
\small
\begin{equation}
        \mathcal{N}\left(\begin{bmatrix}
x_1 \\
y
\end{bmatrix}
;
\;
\begin{bmatrix}
0 \\
0
\end{bmatrix}
,
\begin{bmatrix}
\Gamma_\lambda && \Gamma_\lambda \mu(x_2)^t \\
\mu(x_2)\Gamma_\lambda && \Sigma_{\theta}+\mu(x_2)\Gamma_\lambda \mu(x_2)^t
\end{bmatrix}\right)
\end{equation}
\normalsize
then the inverse of the covariance matrix, thanks to a classical result of block-matrix inversion, turns out to be
\small
\begin{equation*}
    \begin{split}
    &\begin{bmatrix}
    \Gamma_\lambda && \Gamma_\lambda \mu(x_2)^t \\
    \mu(x_2)\Gamma_\lambda && \Sigma_{\theta}+\mu(x_2)\Gamma_\lambda \mu(x_2)^t
    \end{bmatrix}^{-1} \\ 
    & = \begin{bmatrix} \Gamma_\lambda && \Gamma_\lambda \mu(x_2)^t \\ \mu(x_2)\Gamma_\lambda && \Sigma_{\theta}+\mu(x_2)\Gamma_\lambda \mu(x_2)^t \end{bmatrix}^{-1} \\
    &= \begin{bmatrix} \Gamma_\lambda^{-1} (I+\Gamma_\lambda \mu(x_2)^t\Sigma_{\theta}^{-1}\mu(x_2)\Gamma_\lambda\Gamma_\lambda^{-1})&& -\Gamma_\lambda^{-1}\Gamma_\lambda \mu(x_2)^t\Sigma_{\theta}^{-1} \\ -\Sigma_{\theta}^{-1}\mu(x_2)\Gamma_\lambda\Gamma_\lambda^{-1} && \Sigma_{\theta}^{-1} \end{bmatrix}\\
    &=\begin{bmatrix} \Gamma_\lambda^{-1} (I+\Gamma_\lambda \mu(x_2)^t\Sigma_{\theta}^{-1}\mu(x_2)) && -\mu(x_2)^t\Sigma_{\theta}^{-1} \\ -\Sigma_{\theta}^{-1}\mu(x_2) && \Sigma_{\theta}^{-1} \end{bmatrix} \\
    & = \begin{bmatrix} \Gamma_\lambda^{-1} +\mu(x_2)^t\Sigma_{\theta}^{-1}\mu(x_2) && -\mu(x_2)^t\Sigma_{\theta}^{-1} \\ -\Sigma_{\theta}^{-1}\mu(x_2) && \Sigma_{\theta}^{-1} \end{bmatrix},
    \end{split}
\end{equation*}
\normalsize
Where the normalizing constant is
\small
$$\left(\left(2\pi\right)^{\frac{m+k}{2}}\det\left(\begin{bmatrix}
\Gamma_\lambda  && \Gamma_\lambda \mu(x_2)^t \\
\mu(x_2)\Gamma_\lambda  && \Sigma_{\theta}+\mu(x_2)\Gamma_\lambda \mu(x_2)^t
\end{bmatrix}\right)\right)^{-\frac{1}{2}}.$$
\normalsize
where the determinant of the covariance matrix is equal to
\small
\begin{equation*}
\begin{split}
&\det\left(\Sigma_{\theta}+\mu(x_2)\Gamma_\lambda \mu(x_2)^t\right) \\& \det\left(\Gamma_\lambda -\Gamma_\lambda \mu(x_2)^t (\Sigma_{\theta}+\mu(x_2)\Gamma_\lambda \mu(x_2)^t)^{-1}\mu(x_2)\Gamma_\lambda\right)\\
& =\det\left(\Sigma_{\theta} (I+\Sigma_{\theta}^{-1}\mu(x_2)\Gamma_\lambda \mu(x_2)^t)\right)  \det(\Gamma_\lambda)\\& \det\left(I-\mu(x_2)^t (\Sigma_{\theta}+\mu(x_2)\Gamma_\lambda \mu(x_2)^t)^{-1}\mu(x_2)\Gamma_\lambda\right)\\
& =\det(\Sigma_{\theta}) \det\left(I+\Sigma_{\theta}^{-1}\mu(x_2)\Gamma_\lambda \mu(x_2)^t\right) \\& \det(\Gamma_\lambda )\det\left((I+\mu(x_2)^t \Sigma_{\theta}^{-1}\mu(x_2)\Gamma_\lambda )^{-1}\right)\\
& =\det(\Sigma_{\theta}) \det\left(I+\Sigma_{\theta}^{-1}\mu(x_2)\Gamma_\lambda \mu(x_2)^t\right)\det(\Gamma_\lambda) \\& \det\left(I+\Sigma_{\theta}^{-1}\mu(x_2)\Gamma_\lambda \mu(x_2)^t\right)^{-1}\\
& =\det(\Sigma_{\theta})\det(\Gamma_\lambda).
    \end{split}
\end{equation*}
\normalsize
Therefore, from a well known result on Gaussian densities, we obtain the thesis
\small
\begin{equation*}
    \begin{split}
    p^{\theta}(y \mid x_2,\lambda)&=\int p^{\theta}(y,x_1\mid x_2,\lambda)dx_1 \\
    & = \int p^{\theta}(y\mid x_1,x_2,\lambda)p(x_1\mid x_2,\lambda)dx_1 \\
    & =\mathcal{N}\left(y;0, \Sigma_{\theta}+\mu(x_2)\Gamma_\lambda \mu(x_2)^t\right).
    \end{split}
\end{equation*}
\end{proof}

\end{appendices}

\bibliographystyle{plain}
\bibliography{bibliography}


\end{document}